\newtheorem{theorem}{Theorem}[section] 
\newtheorem{assumption}[theorem]{Assumption} 
\newtheorem{lemma}[theorem]{Lemma}         
\newtheorem{definition}{Definition}[section] 
\newtheorem{remark}{Remark}[section]       
\title{\LARGE \bf
Noise-Tolerant Hybrid Approach for Data-Driven Predictive Control
}
\author{Mahmood Mazare, Hossein Ramezani
\thanks{The authors are with the Department of Mechanical and Electrical Engineering, University of Southern Denmark, Denmark.
        {\tt\small \{mazare,ramezani\}@sdu.dk}}%
}
\date{} 
\begin{document}

\maketitle
\thispagestyle{empty}
\pagestyle{empty}

\begin{abstract}
This paper focuses on a key challenge in hybrid data-driven predictive control: the effect of measurement noise on Hankel matrices. While noise is handled in direct and indirect methods, hybrid approaches often overlook its impact during trajectory estimation. We propose a Noise-Tolerant Data-Driven Predictive Control (NTDPC) framework that integrates singular value decomposition to separate system dynamics from noise within reduced-order Hankel matrices. This enables accurate prediction with shorter data horizons and lower computational effort. A sensitivity index is introduced to support horizon selection under different noise levels. Simulation results indicate improved robustness and efficiency compared to existing hybrid methods.\vspace{3mm}

Keywords: Data-driven control, Predictive control, Noise-tolerant, Dimension reduction
\end{abstract}

\section{Introduction}
Over the past decade, Data-Driven Predictive Control (DPC) has emerged as a powerful paradigm that revolutionizes control system design by eliminating the need for explicit mathematical models \cite{coulson2019regularized,verheijen2023handbook}. By directly leveraging input-output data from the physical system, DPC approaches achieve remarkable control performance while significantly reducing the engineering effort traditionally required for system identification and model development \cite{xiong2025data,krishnan2021direct}. DPC approaches can be divided into two categories: indirect and direct methods \cite{hou2013model,markovsky2021behavioral,huang2023robust}.

Indirect methods, exemplified by Subspace Predictive Control (SPC) \cite{favoreel1999spc}, first identify a prediction model from data, which is subsequently used as a constraint in the optimal control problem. This intermediate modeling phase makes these approaches particularly suitable for integration with large datasets. Such integration allows for reduction in predictor bias without increasing the computational complexity of the design problem \cite{berberich2024overview}.

In contrast, direct DPC approaches, such as Data-Enabled Predictive Control (DeePC) \cite{coulson2019data}, instead of estimating a prediction model incorporate the input output samples directly in the optimal control problem \cite{zhang2023dimension}. This is possible by introducing a regularization vector "$\mathbf{g}(k)$" based on Willems' fundamental lemma \cite{willems2005note}. This approach enables the predictor to be "control-oriented" specifically focused on generating predictions for control purposes, however, the optimization problem has a larger size as $\mathbf{g}(k)$ is part of the decision variables that needs to be optimized \cite{alsalti2024sample}.

Direct DPC methods require regularization of $\mathbf{g}(k)$, whose dimension depends on data window length $M$. Since $M$ must be significantly larger than the prediction/control horizons to ensure sufficient excitation, the optimization problem becomes computationally intensive and tuning-sensitive. To address these challenges, hybrid methods are proposed to pre-compute $\mathbf{g}(k)$ before formulating the main control problem \cite{fiedler2021relationship}. These approaches typically employ matrix factorization techniques to select an appropriate solution for $\mathbf{g}(k)$ from the non-unique set of possibilities established by the fundamental lemma \cite{de2025kernelized}.

As two examples of hybrid methods, \cite{yang2015data,smith2024optimal} decompose $\mathbf{g}(k)$ into the past ($\mathbf{g}_\text{ini}(k)$) and the future ($\mathbf{g}_{f}(k)$) components calculated from the past and the future trajectories, respectively. In \cite{yang2015data}, $\mathbf{g}_\text{ini}(k)$ is calculated through a LS problem on the past data while $\mathbf{g}_{f}$ is the solution of an optimization problem on the future data. In \cite{smith2024optimal} however, $\mathbf{g}_\text{ini}(k)$ and $\mathbf{g}_{f}(k)$ are considered to be orthogonal. This assumption restricts the search space for $\mathbf{g}_{f}(k)$ as it should belong to the null space of the past Hankel matrix. After calculating $\mathbf{g}(k)$, the remaining process of calculating the control signal in hybrid methods is more similar to the indirect methods. Compared to the SPC method as the main representative of indirect methods, both hybrid approaches are computationally more efficient as they work with two lower order Hankel matrices rather than a big matrix. However, when handling noisy data their efficiency drops as the estimation of $\mathbf{g}(k)$ components does not consider noisy measurement. This is even more important when knowing that SPC as a base of comparison is efficient for noisy cases as well.

To improve the noise handling of hybrid approaches, this study proposes a different approach for estimation of $\mathbf{g}_\text{ini}(k)$ and $\mathbf{g}_f(k)$ that takes into account the noise characteristics. Compared to the mentioned hybrid methods, our approach borrows the assumption of orthogonality of components from \cite{smith2024optimal} and replaces the LQ factorization with SVD which can exctract the system dynamics from the noisy data when estimating the $\mathbf{g}(k)$ components. 
\textcolor{black}{The application of Singular Value Decomposition (SVD) in data-driven control has recently gained attention, primarily for computational model order reduction within \textit{direct} control formulations \cite{alsalti2024robust,zhang2023dimension, shi2024efficient}. However, the potential of SVD to geometrically separate signal and noise subspaces within a \textit{hybrid} predictive framework remains underexplored. This study uses SVD beyond simple dimensionality reduction, utilizing it as a robust filtering mechanism. By embedding this projection into the predictor construction, the proposed NTDPC framework systematically replaces the heuristic regularization typical of DeePC with a rigorous, noise-tolerant estimation step.}
 In \cite{alsalti2024robust}, using SVD the length of past data and hence the dimension of Hankel matrix is reduced. \cite{zhang2023dimension,shi2024efficient} propose different applications for SVD in which the regularization vector g can be replaced with a lower dimension vector in direct methods. However, when it comes to hybrid methods, using SVD approach is new and even more important. The fact that factorization is used to select one solution among infinite possible solutions for $\mathbf{g}(k)$ components, makes it crucial to select the solution with more information from the system dynamics rather than the noise. The procedure, importance and efficiency of using SVD in hybrid structure are the highlights of this study. \par
In general the number of columns in Hankel matrices is significantly higher than the number of rows to fulfill the persistency of excitation property. On the other hand the number of rows is related to the prediction horizon which in turn is usually much higher than the system order. This implies that (according to the fundamental lemma) the rank of Hankel matrix is even less than the number of rows hence the matrix is rank deficit. However, when the output samples in a Hankel matrix are filled with noisy measurements, the Hankel matrix becomes full row rank. This makes the process of selecting optimum $\mathbf{g}_\text{ini}(k)$ and $\mathbf{g}_f(k)$ more important as these optimum vectors should represent the system dynamics rather than the noise. Regarding the noise handling in existing hybrid methods, \cite{yang2015data} has only considered noise free samples and in \cite{smith2024optimal} the noise is only present in the signals not in the Hankel matrices. This makes the application of those methods limited as in practice there is no access to noise free output signals. Assuming noise for both signals and Hankel matrices, this paper proposes a systematic approach to separate system dynamics from noise using SVD factorization. Thanks to the nature of this factorization and given the system order, Hankel matrices will be decomposed into two parts associated with system dynamics and noise, respectively. \par 
The primary idea behind factorization in both \cite{smith2024optimal} and our study is to replace the regularization vector $\mathbf{g}(k)$ with a lower order vector since the Hankel matrix is not full rank. However, our work differs from \cite{smith2024optimal} when estimating this lower order regularization vector. Since the measurement noise only affects the output not input, estimation is in the form of an optimization problem with a mixture of deterministic and stochastic equations. When using LQ factorization as in \cite{smith2024optimal}, the deterministic part can be solved separately, thanks to the triangular matrix in the equation. This is not the case in our approach where the deterministic equation is not separately solvable. Therefore the overall optimization problem should consider both deterministic and stochastic equations. Regarding the validation of the proposed algorithm, the simulation results are compared with the hybrid method of \cite{smith2024optimal} as well as the optimum solution of SPC method as a reference. The comparison with SPC is also made from the computational effort point of view. Moreover, a sensitivity index is introduced that can help for the selection of predictive control horizon. \vspace{3mm}

\textbf{Notations:} Throughout this paper, the sets of natural numbers and real numbers are denoted by \(\mathbb{N}\) and \(\mathbb{R}\), respectively. The set of real vectors with dimensions \(n \times 1\) is denoted as \(\mathbb{R}^n\). We denote the set of natural numbers excluding zero as \(\mathbb{N}_{\geq 1}\). For any finite collection of \(\{ \xi_1, \ldots, \xi_q \} \in \mathbb{R}^{n_1} \times \cdots \times \mathbb{R}^{n_q}\), where \(q \in \mathbb{N}_{\geq 1}\), we use the operator \(\text{col}(\xi_1, \ldots, \xi_q) := [\xi_1^\top, \ldots, \xi_q^\top]^\top\) to denote the vector formed by stacking these vectors in a column. Given a matrix \(A \in \mathbb{R}^{n \times m}\), its Frobenius norm is denoted by \(\|A\|_{\text{Fro}}\), and its generalized weighted pseudo-inverse is denoted by \(A_W^\dagger:=(A^T W A)^{-1}A^T W\).

\section{PRELIMINARIES}

Consider a discrete-time linear dynamical system influenced by zero-mean Gaussian noise $\zeta(k) \sim \mathcal{N}(0, \Sigma_\zeta)$:
\begin{align}
    x(k + 1) &= Ax(k) + Bu(k), \quad k \in \mathbb{N}, \\
    y(k) &= Cx(k) + \zeta(k),
\end{align}
where $x \in \mathbb{R}^n$ is the state, $u \in \mathbb{R}^{n_u}$ is the control input, $y \in \mathbb{R}^{n_y}$ is the measured output, and $A, B, C$ are real matrices of appropriate dimensions. We assume that $(A, B)$ is controllable and $(A, C)$ is observable. By applying a persistently exciting input sequence $\{u(k)\}_{k=0}^{T-1}$ of length $T$ to system (1), we obtain a corresponding output sequence $\{y(k)\}_{k=0}^{T-1}$. The measurement noise $\zeta(k)$ is assumed to be i.i.d., zero-mean Gaussian, and uncorrelated in time. \\
\textcolor{black}{For an input-output model, we introduce the past horizon $T_{\text{ini}}$ to characterize the system's memory.
Let $y_o(k)$ denote the underlying \textit{noise-free} output of the system (i.e., the output of (1)-(2) if $\zeta(k) \equiv 0$). 
The system dynamics and the measurement model are described by:
\begin{subequations}
\begin{align} 
    y_o(k) &= \sum_{i=1}^{T_{\text{ini}}} a_i y_o(k - i) + \sum_{i=1}^{T_{\text{ini}}} b_i u(k - i), \label{eq:diff_eq_noiseless} \\
    y(k) &= y_o(k) + \zeta(k). \label{eq:measurement_noise}
\end{align}
\end{subequations}
In this representation, the coefficients \( a_i \) and \( b_i \) define the system dynamics. For a general MIMO system, these are real-valued matrices with dimensions \( a_i \in \mathbb{R}^{n_y \times n_y} \) and \( b_i \in \mathbb{R}^{n_y \times n_u} \), while for a SISO system, they reduce to scalar parameters. Furthermore, for simplicity of notation, we assume the same memory horizon \( T_{\text{ini}} \) for both the input and output regressors, rather than assigning distinct lag lengths to \( u \) and \( y \).
}

Now, for any starting time instant $k \geq 0$ within the available data vector and for any data vector length $j \ge 1$, we define the trajectory vectors:
\begin{align}
    \bar{u}(k, j) &:= \operatorname{col}(u(k), \dots, u(k+j-1)), \nonumber \\
    \bar{y}(k, j) &:= \operatorname{col}(y(k), \dots, y(k+j-1)).
\end{align}
\textcolor{black}{In this notation, the first argument $k$ denotes the starting time instant, and the second argument $j$ explicitly denotes the length of the data window.}

\textcolor{black}{Let us define $N$ as the future prediction horizon. Using $T_{\text{ini}}$, we construct the following sequences of past and future inputs and outputs:
\begin{equation}
    \begin{split}
        \mathbf{u}_{\text{ini}}(k) &:= \text{col}(u(k-T_{\text{ini}}), \ldots, u(k-1)), \\
        \mathbf{y}_{\text{ini}}(k) &:= \text{col}(y(k-T_{\text{ini}}+1), \ldots, y(k)),\\
        \mathbf{u}_N(k) &:= \text{col}(u(k), \ldots, u(k+N-1)),\\
        \mathbf{y}_N(k) &:= \text{col}(y(k), \ldots, y(k+N-1)).
    \end{split}
    \label{eq:traj_vectors}
\end{equation}
}

To transition from model-based prediction to a data-driven counterpart, we assume access to an informative set of input-output pairs $\{\bar{u}, \bar{y}\}$ of length $T$. Informativity is ensured by the persistent excitation of the input sequence, where the rank of $\mathbf{Z}_p$ matrix defined in (\ref{EqZp}) must be higher than $T_{\text{ini}} + N + n$. \textcolor{black}{This implies that the data length $T$ must be sufficiently long, specifically $T \geq (n_u + 1)(T_{\text{ini}} + N + n) - 1$.
The persistency of excitation condition is imposed on the input sequence $\{\bar{u}\}$ , which guarantees that the corresponding Hankel matrix $\mathbf{Z}_p$ constructed from the input-output data has sufficient rank.}

\textcolor{black}{Let $M = T - T_{ini} - N + 1$ be the number of columns in the Hankel matrices. The matrices are defined as:}
\begin{equation} \label{eq:hankel}
\textcolor{black}{
\begin{aligned}
\mathbf{U}_p &:= [\bar{u}(0, T_{ini}), \dots, \bar{u}(M-1, T_{ini})], \\
\mathbf{Y}_p &:= [\bar{y}(0, T_{ini}), \dots, \bar{y}(M-1, T_{ini})], \\
\mathbf{U}_f  &:= [\bar{u}(T_{ini}, N), \dots, \bar{u}(T_{ini}+M-1, N)], \\
\mathbf{Y}_f &:= [\bar{y}(T_{ini}, N), \dots, \bar{y}(T_{ini}+M-1, N)].
\end{aligned}
}
\end{equation}
These matrices form the foundation of the data-driven predictive control approach. Next, we introduce three different DPC strategies that utilize this data directly, avoiding the need for an explicit system model or state estimator.

\vspace{3mm}
\textbf{Problem 1 [SPC]}: Subspace Predictive Control (SPC) represents an early indirect data-driven MPC technique. It directly incorporates results from subspace identification into the MPC design by estimating prediction matrices from input/output data based on the relationship $\mathbf{y}_N(k) = \begin{bmatrix} \mathbf{P}_1 & \mathbf{P}_2 & \mathbf{\Gamma} \end{bmatrix} \begin{bmatrix} \mathbf{u}_{\text{ini}}(k)^T & \mathbf{y}_{\text{ini}}(k)^T & \mathbf{u}_N(k)^T \end{bmatrix}^T$. The unknown matrix $\mathbf{\Theta} = \begin{bmatrix} \mathbf{P}_1 & \mathbf{P}_2 & \mathbf{\Gamma} \end{bmatrix}$ is estimated by solving the least-squares problem $ \min_{\mathbf{\Theta}} \left\|  \mathbf{Y}_f - \mathbf{\Theta} \begin{bmatrix} \mathbf{U}_p^T & \mathbf{Y}_p^T & \mathbf{U}_f^T \end{bmatrix}^T \right\|_{\text{Fro}}^2$, yielding the solution $\mathbf{\Theta}^* = \begin{bmatrix} \mathbf{P}_1 & \mathbf{P}_2 & \mathbf{\hat{\Gamma}} \end{bmatrix}$. The SPC problem is then formulated as \cite{favoreel1999spc}:
\begin{equation} \label{eq:SPC_problem}
    \begin{split}
        \min_{\mathbf{u}_N(k), \mathbf{y}_N(k)} \quad & \| Q^{\frac{1}{2}}(\mathbf{y}_N(k) - r_{y}) \|_2^2 + \| R^{\frac{1}{2}}(\mathbf{u}_N(k) - r_{u}) \|_2^2 \\
        \text{subject to} \quad & \mathbf{y}_N(k) = \mathbf{P}_1 \mathbf{u}_{\text{ini}}(k) + \mathbf{P}_2 \mathbf{y}_{\text{ini}}(k) + \mathbf{\hat{\Gamma}} \mathbf{u}_N(k), \\
                                & \mathbf{u}_N(k) \in \mathcal{U}, \quad \mathbf{y}_N(k) \in \mathcal{Y}.
    \end{split}
\end{equation}

\textcolor{black}{Unlike SPC, which relies on an intermediate identification step to find prediction matrices, DeePC is a direct approach that embeds the raw data samples into the optimization problem.}

\vspace{3mm}
\textbf{Problem 2 [DeePC]}:
Given immediate past input and measured output signals, find a future input sequence by solving \cite{dorfler2022bridging}:
\begin{equation}
    \begin{split}
        \min_{\Theta(k)} \quad & \| Q^{\frac{1}{2}}(\mathbf{y}_N(k) - r_{y}) \|_2^2 + \| R^{\frac{1}{2}}(\mathbf{u}_N(k) - r_{u}) \|_2^2 + \lambda_g l_g(\mathbf{g}(k)) + \lambda_y \|\sigma_y(k)\|_2^2 \\
        \text{subject to} \quad &  \begin{bmatrix}
            \mathbf{U}_p \\
            \mathbf{Y}_p \\
            \mathbf{U}_f \\
            \mathbf{Y}_f
        \end{bmatrix}
        \mathbf{g}(k) =
        \begin{bmatrix}
            \mathbf{u}_{\text{ini}}(k) \\
            \mathbf{y}_{\text{ini}}(k)+\sigma_y(k) \\
            \mathbf{u}_N(k) \\
            \mathbf{y}_N(k)
        \end{bmatrix}, \\
        & \mathbf{u}_N(k) \in \mathcal{U}, \quad \mathbf{y}_N(k) \in \mathcal{Y}
    \end{split}
\end{equation}
where $\Theta(k) =\{\mathbf{u}_N(k), \mathbf{y}_N(k), \mathbf{g}(k), \sigma_y(k)\}$ and $l_g(\mathbf{g}(k))$ is a user-defined regularization function over $\mathbf{g}(k)$, often chosen as $\|\mathbf{g}(k)\|_2^2$. The variable $\sigma_y(k)$ is a slack variable introduced to ensure feasibility in the presence of noise.

\textcolor{black}{While DeePC offers a direct data-driven formulation, the optimization vector $\mathbf{g}(k)$ scales with the dataset size, which can be computationally intensive. To mitigate this and improve noise handling, hybrid approaches have been proposed that pre-compute an optimal predictor offline.}

\vspace{3mm}
\textbf{Problem 3 [SMMPC]}: A variant of DPC, Signal Matrix Model Predictive Control (SMMPC) \cite{smith2024optimal}, employs a minimum-variance unbiased predictor derived offline. The control problem is solved as:
\begin{equation} \label{eq:SMMPC_constraint}
    \begin{split}
        \min_{\mathbf{u}_N(k), \mathbf{y}_N(k)} \quad & \| Q^{\frac{1}{2}}(\mathbf{y}_N(k) - r_{y}) \|_2^2 + \| R^{\frac{1}{2}}(\mathbf{u}_N(k) - r_{u}) \|_2^2 \\
        \text{subject to} \quad & \mathbf{y}_N(k) = \mathbf{E}_1 \mathbf{u}_{\text{ini}}(k) + \mathbf{E}_2 \mathbf{y}_{\text{ini}}(k) + \mathbf{E}_3 \mathbf{u}_N(k),  \\
                                & \mathbf{u}_N(k) \in \mathcal{U}, \quad \mathbf{y}_N(k) \in \mathcal{Y}.
    \end{split}
\end{equation}
\textcolor{black}{The computational efficiency of SMMPC arises from the offline calculation of the predictor matrices $\mathbf{E}_1$, $\mathbf{E}_2$, and $\mathbf{E}_3$, which replaces the high-dimensional online optimization of $\mathbf{g}(k)$ required in DeePC.}

\section{Indirect DPC using reduced-order Hankel matrices}\label{sec:indirect_dpc_reduced_hankel}

In data-driven predictive control, the Hankel matrices constructed from historical data typically possess significantly more columns than rows ($M \gg (n_u+n_y)T_{ini}$). This structural characteristic renders the linear system for the regularization vector $\mathbf{g}(k)$ underdetermined, admitting an infinite set of possible solutions. Consequently, standard approaches require regularization terms (e.g., minimum norm) to select a specific solution.\\
To address the underdetermined nature of the problem, we adopt a sequential identification strategy that decomposes the regularization vector $\mathbf{g}(k)$ into two orthogonal components: a past component $\mathbf{g}_{\text{ini}}(k)$ determined by historical data, and a future component $\mathbf{g}_{\text{f}}(k)$ that governs the predictive behavior. This decomposition is formulated as:
\begin{equation}
    \begin{bmatrix}
        \mathbf{U}_p \\
        \mathbf{Y}_p \\
        \mathbf{U}_f \\
        \mathbf{Y}_f
    \end{bmatrix}
    \big(\mathbf{g}_{\text{ini}}(k)+\mathbf{g}_{\text{f}}(k)\big) =
    \begin{bmatrix}
        \mathbf{u}_{\text{ini}}(k) \\
        \mathbf{y}_{\text{ini}}(k) \\
        \mathbf{u}_N(k) \\
        \mathbf{y}_N(k)
    \end{bmatrix}.
    \label{eq:UY_g_uy}
\end{equation}
\textcolor{black}{Equation \eqref{eq:UY_g_uy} poses an identification challenge due to the rank properties of the data matrix. In the noise-free case, the Hankel matrix is rank-deficient (rank equal to the system order plus the input horizon), reflecting the underlying low-dimensional dynamics. However, measurement noise invariably renders the matrix full row rank. In this full-rank scenario, standard approaches like LQ factorization force a unique solution that effectively fits the noise, treating stochastic perturbations as valid system dynamics. To prevent this overfitting, we propose a transformation that enforces the underlying low-rank structure. By assuming a spectral gap, we utilize SVD to separate the dominant signal subspace ($\Sigma_1$) from the noise subspace ($\Sigma_2$). This allows us to project the high-dimensional $\mathbf{g}(k)$ onto a lower-dimensional latent basis ($\eta_1, \eta_2$), converting the ill-posed noisy problem into a well-posed, noise-tolerant estimation problem.}\\

\subsection{Derivation of a Parsimonious Formulation}

We condense the mapping in \eqref{eq:UY_g_uy} into a minimal parameterization. Consider the stacked Hankel matrices representing the past and future input-output sequences:
\begin{equation}\label{EqZp}
    \mathbf{Z}_p := \begin{bmatrix} \mathbf{U}_p \\ \mathbf{Y}_p \end{bmatrix}, \quad
    \mathbf{Z}_f := \begin{bmatrix} \mathbf{U}_f \\ \mathbf{Y}_f \end{bmatrix}, \quad
    \mathbf{z}_{\text{ini}}(k) := \begin{bmatrix} \mathbf{u}_{\text{ini}}(k) \\ \mathbf{y}_{\text{ini}}(k) \end{bmatrix}.
\end{equation}

\begin{figure}[t]
    \centering
    \resizebox{\columnwidth}{!}{
    \begin{tikzpicture}[every node/.style={transform shape}]

        \def\H{3.0}      
        \def\Wp{5.0}     
        \def\Ws{5.0}     
        \def\R{1.2}      
        \def\Wsize{3.0}  
        
        \def\gap{0.5}          
        \def\bracegap{0.8}     
        
        \draw[thick, fill=green!10] (0,0) rectangle (\Wp, \H);
        \node at (0.5*\Wp, 0.5*\H) {\Huge $Z_p$};
        \node[anchor=south] at (0.5*\Wp, 0.1) {\large Past Data};
        
        \node at (\Wp + \gap, 0.5*\H) {\Huge $=$};
        
        \begin{scope}[shift={(\Wp + 2*\gap, 0)}]
            
            \draw[thick, fill=yellow!10] (0,0) rectangle (\Wsize, \H);
            
            \node[black!10] at (0.5*\Wsize, 0.5*\H) {\scalebox{5}{$W$}};
            
            \draw[dashed, thick] (\R, 0) -- (\R, \H);
            
            \node at (0.25*\R, 0.5*\H) [rotate=90] {\large -- $w_1$ --};
            \node at (0.6*\R, 0.5*\H) {$\dots$};
            \node at (0.85*\R, 0.5*\H) [rotate=90] {\large -- $w_r$ --};
            
            \node at (\R + 0.35, 0.5*\H) [rotate=90] {\large -- $w_{r+1}$ --};
            \node at (\R + 0.5*\Wsize - 0.5*\R, 0.5*\H) {$\dots$};
            \node at (\Wsize - 0.5, 0.5*\H) [rotate=90] {\large -- $w_M$ --};
            
            \draw [decorate,decoration={brace,amplitude=6pt}, thick]
                (0,\H+0.2) -- (\R,\H+0.2) node [midway,above,yshift=6pt] {\Large $W_1$};
                
            \draw [decorate,decoration={brace,amplitude=6pt}, thick]
                (\R,\H+0.2) -- (\Wsize,\H+0.2) node [midway,above,yshift=6pt] {\Large $W_2$};
        \end{scope}
        
        \begin{scope}[shift={(\Wp + 2*\gap + \Wsize + \bracegap, 0)}]
            \draw[thick] (0,0) rectangle (\Ws, \H);
            
            \draw[fill=black!5] (0, \H) rectangle (\R, \H-\R);
            
            \node[black!15] at (0.5*\R, \H-0.5*\R) {\scalebox{2.0}{$\Sigma_1$}};
            
            \node[black!90!black, font=\bfseries] at (0.2, \H-0.25) {$\sigma_1$};
            \node[black!90!black, font=\bfseries] at (\R-0.3, \H-\R+0.25) {$\sigma_r$};
            \draw[dotted, very thick, black] (0.5, \H-0.5) -- (\R-0.6, \H-\R+0.6);

            \draw[fill=red!10, dashed] (\R, \H-\R) rectangle (\Ws, 0);
            \node[red!60!black] at (\R + 0.5*\Ws - 0.5*\R, 0.5*\H - 0.5*\R) {\huge $\Sigma_2 \ll $};
            
            \node at (\Ws-0.5, \H-0.5) {\Large $0$};
            \node at (0.5, 0.5) {\Large $0$};
            
            \draw [decorate,decoration={brace,amplitude=6pt}, thick]
                (0,\H+0.2) -- (\R,\H+0.2) node [midway,above,yshift=6pt] {\large Signal};
            \draw [decorate,decoration={brace,amplitude=6pt}, thick]
                (\R,\H+0.2) -- (\Ws,\H+0.2) node [midway,above,yshift=6pt] {\large Noise};
        \end{scope}
        
        
        \begin{scope}[shift={(\Wp + 2*\gap + \Wsize + \bracegap + \Ws + 2*\gap, {0.5*\H - 0.5*\Ws})}]
            
            \draw[thick, fill=orange!10] (0,0) rectangle (\Ws, \Ws);
            \node[black!10] at (0.5*\Ws, 0.5*\Ws) {\scalebox{5}{$V^\top$}};
            
            \draw[thick, fill=orange!20] (0, \Ws) rectangle (\Ws, \Ws-\R);
            \node at (0.5*\Ws, \Ws-0.3) {--- $v_1^\top$ ---};
            \node at (0.5*\Ws, \Ws-0.65) {$\vdots$};
            \node at (0.5*\Ws, \Ws-\R+0.3) {--- $v_r^\top$ ---};
            
            \node at (0.5*\Ws, \Ws-\R-0.4) {--- $v_{r+1}^\top$ ---};
            \node at (0.5*\Ws, 0.5*\Ws - 0.5*\R) {$\vdots$};
            \node at (0.5*\Ws, 0.4) {--- $v_M^\top$ ---};

            \draw[dashed, thick] (0, \Ws-\R) -- (\Ws, \Ws-\R);

            \draw [decorate,decoration={brace,amplitude=6pt}, thick]
                (\Ws+0.2, \Ws) -- (\Ws+0.2, \Ws-\R) node [midway,right,xshift=6pt] {\Large $V_1^\top$ (Range)};
                
            \draw [decorate,decoration={brace,amplitude=6pt}, thick]
                (\Ws+0.2, \Ws-\R) -- (\Ws+0.2, 0) node [midway,right,xshift=6pt] {\Large $V_2^\top$ (Null)};
        \end{scope}

    \end{tikzpicture}
    }
    \caption{\textcolor{black}{SVD Decomposition of the noisy Hankel matrix $Z_p$. $\Sigma_1$ captures dominant dynamics; $\Sigma_2$ represents noise. $W$ is partitioned into signal columns $W_1$ and noise columns $W_2$.}}
    \label{fig:svd_decomp}
\end{figure}
By performing SVD factorization on the past data matrix $\mathbf{Z}_p$:
\begin{equation}
    \mathbf{Z}_p  = W \Sigma V^T.
\end{equation}
where $W$ and $V$ are unitary matrices. In the ideal noise-free case ($\zeta=0$), the matrix $\Sigma$ is rank-deficient. \textcolor{black}{We can partition the decomposition as:}
\begin{equation}\label{eq:Zp_SVD_L1V1T}
    \mathbf{Z}_p  = \begin{bmatrix} W_1 & W_2 \end{bmatrix} \begin{bmatrix} \Sigma_1 & 0 \\ 0 & 0 \end{bmatrix} \begin{bmatrix} V_1^T \\ V_2^T \end{bmatrix} = W_1 \Sigma_1 V_1^T = L_1 V_1^T,
\end{equation}
\textcolor{black}{where we define $L_1 := W_1 \Sigma_1$. Here, $\Sigma_1$ is a diagonal matrix containing the non-zero singular values (see Fig.\ref{fig:svd_decomp}). The rank of the noiseless $\mathbf{Z}_p$ is given by $r = n_u T_{\text{ini}} + n$ \cite{van1996subspace}. Consequently, the dimensions are defined as $\Sigma_1 \in \mathbb{R}^{r \times r}$ and $V_1^T \in \mathbb{R}^{r \times M}$, where $M$ is the number of columns in $\mathbf{Z}_p$.}

In cases of low Signal-to-Noise Ratio (SNR), the singular values of the system dynamics and noise can indeed overlap or "flip," making simple truncation problematic.
Therefore, we explicitly clarify that our method operates under the assumption of a sufficient spectral gap between the dominant system dynamics and the noise floor.

\textcolor{black}{Substituting this decomposition into the past trajectory equation \eqref{eq:UY_g_uy}:}
\begin{equation}
    \mathbf{z}_{\text{ini}}(k) = \mathbf{Z}_p \big(\mathbf{g}_{\text{ini}}(k) + \mathbf{g}_{\text{f}}(k)\big).
    \label{eq:z_ini_Zp_g_sum}
\end{equation}
\textcolor{black}{We enforce the condition that $\mathbf{g}_{\text{f}}(k)$ lies in the null space of $\mathbf{Z}_p$, such that $\mathbf{Z}_p \mathbf{g}_{\text{f}}(k) = 0$. This simplifies \eqref{eq:z_ini_Zp_g_sum} to:}
\begin{equation}
    \mathbf{z}_{\text{ini}}(k) = \mathbf{Z}_p \mathbf{g}_{\text{ini}}(k) = L_1 V_1^T \mathbf{g}_{\text{ini}}(k).
    \label{eq:z_ini_L1V1T_gini}
\end{equation}
\textcolor{black}{We now define the latent vector $\eta_1(k) \in \mathbb{R}^{r}$ as the projection of the regularization vector onto the active subspace:}
\begin{equation}
    \eta_1(k) = V_1^T \mathbf{g}_{\text{ini}}(k).
\end{equation}
\textcolor{black}{This yields the following linear system describing the initial trajectory:}
\begin{equation}
    \mathbf{z}_{\text{ini}}(k) = L_1 \eta_1(k).
    \label{eq:z_ini_eta}
\end{equation}
\textcolor{black}{Since we seek the minimum-norm solution for $\mathbf{g}_{\text{ini}}(k)$ that satisfies the data consistency, we restrict $\mathbf{g}_{\text{ini}}(k)$ to the row space of $\mathbf{Z}_p$ (spanned by $V_1$). This allows us to recover $\mathbf{g}_{\text{ini}}(k)$ uniquely from $\eta_1(k)$ via:}
\begin{equation}
    \mathbf{g}_{\text{ini}}(k) = V_1 \eta_1(k).
    \label{eq:g_ini_eta2}
\end{equation}

\textcolor{black}{In the realistic case where $\zeta \neq 0$, the measured output matrix $\mathbf{Y}_p$ is noisy, and $\mathbf{Z}_p$ becomes full rank. Let $M$ be the number of columns. By utilizing  SVD, yielding a rectangular singular value matrix $\Sigma \in \mathbb{R}^{(n_u + n_y)T_{ini} \times M}$, which is partitioned as:
\begin{equation}\label{eq:ss}
\Sigma = \begin{bmatrix} \Sigma_1 & 0 \\ 0 & \Sigma_2 \end{bmatrix},
\end{equation}
where $\Sigma_1 \in \mathbb{R}^{r \times r}$ contains the $r$ dominant singular values associated with the system dynamics. The second block, $\Sigma_2 \in \mathbb{R}^{((n_u + n_y)T_{ini}-r) \times (M-r)}$, is a rectangular matrix containing the remaining singular values dominated by noise (padded with zeros as necessary). By truncating $\Sigma_2$, we obtain a low-rank approximation that serves as a robust basis for estimating $\eta_1(k)$.}

\subsection{Best Linear Unbiased Estimator for \(\eta_1(k)\)}
\label{subsec:black_eta1}

We now formalize the estimation of the latent vector \(\eta_1(k)\) in the presence of noise. \textcolor{black}{This formulation distinguishes between the deterministic inputs (controller outputs) and the stochastic measurements (noisy sensor measurements) using the Best Linear Unbiased Estimator (black) framework.
In the following derivation, we treat the matrix $L_y$ as a deterministic basis for the system dynamics. This is justified by the offline SVD truncation step (Section 3.1), which uses the large data history $M$ to filter out measurement noise and extract the dominant signal subspace. Consequently, while the offline raw data is noisy, the resulting basis $L_y$ represents the denoised nominal model, whereas the single online measurement $z_{ini,m}(k)$ retains significant noise that must be filtered.}

\begin{theorem} \label{thm:black_eta1}\textcolor{black}{
Let the measured initial trajectory be \(\mathbf{z}_{\text{ini},m}(k) = \begin{bmatrix} \mathbf{u}_{\text{ini}}(k)^\top & \mathbf{y}_{\text{ini},m}(k)^\top \end{bmatrix}^\top\), where \(\mathbf{y}_{\text{ini},m}(k) = \mathbf{y}_{\text{ini}}(k) + \boldsymbol{\zeta}_{\text{ini}}(k)\), with measurement noise \(\boldsymbol{\zeta}_{\text{ini}}(k) \sim \mathcal{N}(0, \mathbf{\Sigma}_{\zeta})\), and $\mathbf{z}_{\text{ini},m}(k) = L_1 \eta_1(k)$.
By partitioning the system basis matrix \(L_1\) from \eqref{eq:Zp_SVD_L1V1T} as \(L_1 = \begin{bmatrix} L_u^\top & L_y^\top \end{bmatrix}^\top\), where \(L_u \in \mathbb{R}^{n_u T_{\text{ini}} \times r}\) and \(L_y \in \mathbb{R}^{n_y T_{\text{ini}} \times r}\),
the Best Linear Unbiased Estimator (black) of \(\eta_1(k)\) is given by:
\begin{equation}\label{eq:black_eta1}
    \hat{\eta}_1(k) = \mathcal{K} \mathbf{z}_{\text{ini},m}(k),
\end{equation}
where the estimator gain matrix \(\mathcal{K} \in \mathbb{R}^{r \times (n_u + n_y)T_{\text{ini}}}\) is defined as:
\begin{equation}\label{eq:K_gain_matrix}
    \mathcal{K} = \begin{bmatrix}
        L_u^{\dagger} - K_y L_y L_u^{\dagger} & K_y
    \end{bmatrix},
\end{equation}
with the auxiliary correction gain \(K_y\) given by:
\begin{equation}\label{eq:Ky_def}
    K_y = P \left( P^\top L_y^\top \mathbf{\Sigma}_{\zeta}^{-1} L_y P \right)^{\dagger} P^\top L_y^\top \mathbf{\Sigma}_{\zeta}^{-1},
\end{equation}
and \(P = I - L_u^{\dagger} L_u\) is the orthogonal projector onto the null space of \(L_u\).}
\end{theorem}

\begin{proof}\textcolor{black}{
The estimation problem is formulated as a constrained generalized least-squares optimization. We seek the estimate \(\hat{\eta}_1(k)\) that minimizes the variance of the measurement residual subject to the deterministic input constraints:
\begin{equation}\label{eq:ls_problem}
    \begin{aligned}
        \min_{\eta_1} \quad & J = \left( \mathbf{y}_{\text{ini},m}(k) - L_y \eta_1 \right)^\top \mathbf{\Sigma}_{\zeta}^{-1} \left( \mathbf{y}_{\text{ini},m}(k) - L_y \eta_1 \right) \\
        \text{s.t.} \quad & L_u \eta_1 = \mathbf{u}_{\text{ini}}(k).
    \end{aligned}
\end{equation}
The general solution satisfying the linear constraint is \(\eta_1 = L_u^{\dagger} \mathbf{u}_{\text{ini}}(k) + P \xi\), where \(\xi\) is an arbitrary vector in the null space of \(L_u\). Substituting this parameterization into the objective function \(J\) transforms the problem into an unconstrained minimization over \(\xi\):
\begin{equation}
    \min_{\xi} \left\| \mathbf{\Sigma}_{\zeta}^{-1/2} \left( \tilde{\mathbf{y}}(k) - L_y P \xi \right) \right\|_2^2,
\end{equation}
where \(\tilde{\mathbf{y}}(k) = \mathbf{y}_{\text{ini},m}(k) - L_y L_u^{\dagger} \mathbf{u}_{\text{ini}}(k)\) represents the output residual after removing the effect of the known inputs. The optimal \(\xi\) is obtained via the weighted pseudoinverse:
\begin{equation}
    \hat{\xi} = \left( L_y P \right)^{\dagger} \mathbf{\Sigma}_{\zeta}^{-1}\tilde{\mathbf{y}}(k) = \left( P^\top L_y^\top \mathbf{\Sigma}_{\zeta}^{-1} L_y P \right)^{\dagger} P^\top L_y^\top \mathbf{\Sigma}_{\zeta}^{-1} \tilde{\mathbf{y}}(k).
\end{equation}
Substituting \(\hat{\xi}\) back into the expression for \(\eta_1\) yields:
\begin{equation}
    \hat{\eta}_1(k) = L_u^{\dagger} \mathbf{u}_{\text{ini}}(k) + P \hat{\xi} = L_u^{\dagger} \mathbf{u}_{\text{ini}}(k) + K_y \left( \mathbf{y}_{\text{ini},m}(k) - L_y L_u^{\dagger} \mathbf{u}_{\text{ini}}(k) \right).
\end{equation}
Grouping the terms multiplying \(\mathbf{u}_{\text{ini}}(k)\) and \(\mathbf{y}_{\text{ini},m}(k)\) results in the block matrix form presented in \eqref{eq:K_gain_matrix}. We assume that the output subspace \(L_y P\) has sufficient rank to identify the free components of \(\eta_1\).
Crucially, this estimation relies strictly on the available information set \(\mathcal{F}_k\) (the past noisy trajectory). While the offline Hankel matrix contains future data correlations, the online estimator must remain causal; therefore, \(\hat{\eta}_1(k)\) is derived solely from \(\mathbf{z}_{\text{ini},m}(k)\) without utilizing future block rows, which are unavailable at time \(k\).}
\end{proof}

\textcolor{black}{
\begin{remark}[Estimation and Robustness of $\Sigma_\zeta$]
While $\Sigma_\zeta$ is assumed known, in practice it can be estimated via the sample covariance of steady-state sensor data or through an iterative feasible generalized least squares (FGLS) approach using residuals. The proposed estimator $\hat{\eta}_1(k)$ exhibits inherent robustness to estimation errors as explained below:
\begin{enumerate}
\item \textbf{Scale Invariance:} The estimator is invariant to absolute scaling of the covariance matrix (i.e., replacing $\Sigma_\zeta$ with $\alpha \Sigma_\zeta$ for $\alpha > 0$ yields the identical gain $\mathcal{K}$). Consequently, for MIMO systems the exact noise magnitude estimation is not required; only the relative variance ratios between sensor channels are necessary.
\item \textbf{Unbiasedness:} An incorrect covariance structure results in a loss of minimum-variance optimality (efficiency) but preserves the unbiasedness of the predictor, as the projection condition holds for any positive definite weighting matrix.
\end{enumerate}
\end{remark}
}

\subsection{Future Trajectory Parameterization}
\textcolor{black}{
To determine the future component of the regularization vector, denoted as $\mathbf{g}_f(k)$, any valid solution must lie in the null-space of $\mathbf{Z}_p$ to satisfy the orthogonality condition $\mathbf{g}_f \perp \mathbf{g}_{\text{ini}}$. Consequently, $\mathbf{g}_f(k)$ can be parameterized as:
\begin{equation}
    \mathbf{g}_f(k) = V_2 x_{np}(k),
    \label{eq:g_f_x_np}
\end{equation} 
where $V_2$ is the matrix of right singular vectors corresponding to the null space of $\mathbf{Z}_p$, derived from the SVD in \eqref{eq:Zp_SVD_L1V1T}. The vector $x_{np}(k) \in \mathbb{R}^{M - r}$ represents the degrees of freedom in the null space.}

The future input-output dynamics satisfy the fundamental data equation:
\begin{equation}
    \mathbf{Z}_f \Big(\mathbf{g}_{\text{ini}}(k) + \mathbf{g}_f(k)\Big) = \begin{bmatrix} \mathbf{u}_N(k) \\ \mathbf{y}_N(k) \end{bmatrix}.
    \label{eq:Zf_g_UYN}
\end{equation}
\textcolor{black}{Substituting \eqref{eq:g_f_x_np} into the future term $\mathbf{Z}_f \mathbf{g}_f(k)$ yields $\mathbf{Z}_f V_2 x_{np}(k)$. To compress this representation, we perform an SVD on the matrix product $\mathbf{Z}_f V_2$:}
\begin{equation}\label{eq:svd_ZfV2}
    \mathbf{Z}_f V_2 = W_f \Sigma_f V_f^T = \begin{bmatrix} W_{f_1} & W_{f_2} \end{bmatrix} \begin{bmatrix} \Sigma_{f_1} & 0 \\ 0 & 0 \end{bmatrix} \begin{bmatrix} V_{f_1}^T \\ V_{f_2}^T \end{bmatrix} = W_{f_1} \Sigma_{f_1} V_{f_1}^T.
\end{equation}
Since $\mathbf{Z}_f V_2$ represents the free future evolution consistent with past data, its dominant singular directions capture the controllable future dynamics.
Let $L_{f_1} = W_{f_1} \Sigma_{f_1}$. We define a reduced coordinate vector $\eta_2(k) \in \mathbb{R}^{n_u N}$ as $\eta_2(k) = V_{f_1}^T x_{np}(k)$, \textcolor{black}{ where under sufficient excitation of future inputs, the rank of $\mathbf{Z}_f V_2$ equals $n_u N$. This allows us to express the contribution of the future regularization vector compactly as:}
\begin{equation}
    \mathbf{Z}_f \mathbf{g}_f(k) = L_{f_1} \eta_2(k).
    \label{eq:Zf_gf_eta2}
\end{equation}
\textcolor{black}{In the presence of noise, the matrix $\Sigma_f$ will be full rank, taking the form $\Sigma_f = \text{diag}(\Sigma_{f_1}, \Sigma_{f_2})$. However, consistent with our noise-handling strategy, we assume the dominant singular values in $\Sigma_{f_1}$ capture the system dynamics, while $\Sigma_{f_2}$ is dominated by noise and is therefore truncated.}

\textcolor{black}{Note that $\eta_2(k) \in \mathbb{R}^{n_u N}$ and $L_{f_1} \in \mathbb{R}^{(n_u + n_y)N \times n_u N}$. The matrix $L_{f_1}$ is partitioned as $L_{f_1} = \begin{bmatrix} L_{f_u}^\top & L_{f_y}^\top \end{bmatrix}^\top$ with $L_{f_u} \in \mathbb{R}^{n_u N \times n_u N}$ and $L_{f_y} \in \mathbb{R}^{n_y N \times n_u N}$.}

\subsection{Dimension Reduction of the Future Trajectory Parameterization}
\textcolor{black}{We now unify the past and future parameterizations into a single low-order system. The past trajectory is given by $\mathbf{z}_{\text{ini}}(k) = \mathbf{Z}_p (\mathbf{g}_{\text{ini}}(k) + \mathbf{g}_f(k))$. Since $\mathbf{g}_f(k)$ lies in the null space of $\mathbf{Z}_p$, this simplifies to $\mathbf{z}_{\text{ini}}(k) = \mathbf{Z}_p \mathbf{g}_{\text{ini}}(k)$, which we previously derived in \eqref{eq:z_ini_eta} as $\mathbf{z}_{\text{ini}}(k) = L_1 \eta_1(k)$.}

\textcolor{black}{For the future trajectory, we substitute $\mathbf{g}_{\text{ini}}(k) = V_1 \eta_1(k)$ and the result from \eqref{eq:Zf_gf_eta2} into \eqref{eq:Zf_g_UYN}:
\begin{equation}
    \begin{bmatrix} \mathbf{u}_N(k) \\ \mathbf{y}_N(k) \end{bmatrix} = \mathbf{Z}_f V_1 \eta_1(k) + L_{f_1} \eta_2(k).
\end{equation}
Defining the coupling matrix $S = \mathbf{Z}_f V_1$, we can express the full trajectory vector in terms of the latent variables $\eta_1(k)$ and $\eta_2(k)$:}
\begin{equation}
    \begin{bmatrix}
        \mathbf{u}_{\text{ini}}(k) \\
        \mathbf{y}_{\text{ini}}(k) \\
        \mathbf{u}_N(k) \\
        \mathbf{y}_N(k)
    \end{bmatrix} =
    \begin{bmatrix}
        L_1 & 0 \\
        S & L_{f_1}
    \end{bmatrix}
    \begin{bmatrix}
        \eta_1(k) \\ \eta_2(k)
    \end{bmatrix}.
    \label{eq:uy_LS_eta}
\end{equation}
\textcolor{black}{To facilitate the predictor derivation, we partition the matrices $L_1$, $S$, and $L_{f_1}$ to separate input and output components:
\begin{equation}
    L_1 = \begin{bmatrix} L_u \\ L_y \end{bmatrix}, \quad
    S = \begin{bmatrix} S_{u} \\ S_{y} \end{bmatrix}, \quad
    L_{f_1} = \begin{bmatrix} L_{f_u} \\ L_{f_{y}} \end{bmatrix}.
\end{equation}
\noindent where the dimensions of the partitioned matrices are defined consistent with the signal dimensions: $L_u \in \mathbb{R}^{n_u T_{ini} \times r}$ and $L_y \in \mathbb{R}^{n_y T_{ini} \times r}$, $S_u \in \mathbb{R}^{n_u N \times r}$, $S_y \in \mathbb{R}^{n_y N \times r}$. This implicitly satisfies $\mathbf{u}_{\text{ini}}(k) = L_u \eta_1(k)$ and $\mathbf{y}_{\text{ini}}(k) = L_y \eta_1(k)$.}

\subsection{Optimal Multi-Step Predictor}
\textcolor{black}{We apply the parameterization in \eqref{eq:uy_LS_eta} to construct an $N$-step ahead predictor. Our goal is to derive the minimum variance unbiased predictor of the future output $\mathbf{y}_N(k)$ given the estimated initial state and future control inputs.
Expanding the lower block of \eqref{eq:uy_LS_eta}:
\begin{equation}
    \begin{split}
        \mathbf{u}_N(k) &= S_u \eta_1(k) + L_{f_u} \eta_2(k), \\
        \mathbf{y}_N(k) &= S_y \eta_1(k) + L_{f_y} \eta_2(k).
    \end{split}
    \label{eq:uyN_SL_eta}
\end{equation}
Substituting the black estimate $\hat{\eta}_1(k)$ from Theorem \ref{thm:black_eta1} into \eqref{eq:uyN_SL_eta} yields the prediction model:}
\begin{equation}\label{eq:uyN_predictor}
    \begin{split}
        \mathbf{u}_N(k) &= S_u \hat{\eta}_1(k) + L_{f_u} \eta_2(k), \\
        \mathbf{y}_N(k) &= S_y \hat{\eta}_1(k) + L_{f_y} \eta_2(k).
    \end{split}
\end{equation}

\textcolor{black}{Since $L_{f_u}$ is invertible, we can uniquely solve for the future latent vector $\eta_2(k)$ using the specified future inputs $\mathbf{u}_N(k)$:}
\begin{equation}
    \eta_2(k) = L_{f_u}^{-1} \left( \mathbf{u}_N(k) - S_u \hat{\eta}_1(k) \right).
    \label{eq:eta2_solution}
\end{equation}
Substituting \eqref{eq:eta2_solution} into the output equation in \eqref{eq:uyN_predictor}, we obtain the explicit predictor $\hat{\mathbf{y}}_N(k)$:
\begin{equation}\label{eq:yN_predictor}
    \begin{split}
        \hat{\mathbf{y}}_N(k) &= S_y \hat{\eta}_1(k) + L_{f_y} L_{f_u}^{-1} \left( \mathbf{u}_N(k) - S_u \hat{\eta}_1(k) \right) \\
        &= \left( S_y - L_{f_y} L_{f_u}^{-1} S_u \right) \hat{\eta}_1(k) + L_{f_y} L_{f_u}^{-1} \mathbf{u}_N(k).
    \end{split}
\end{equation}
\textcolor{black}{Replacing $\hat{\eta}_1(k)$ with the estimator $\mathcal{K}~ \mathbf{z}_{\text{ini},m}(k)$ derived in the Theorem \ref{thm:black_eta1}:}
\begin{equation}\label{eq:yN_predictor_final_form}
    \hat{\mathbf{y}}_N(k) = \left( S_y - L_{f_y} L_{f_u}^{-1} S_u \right) \mathcal{K} \mathbf{z}_{\text{ini},m}(k) + L_{f_y} L_{f_u}^{-1} \mathbf{u}_N(k).
\end{equation}
We define the predictor system matrices as:
\begin{equation}
    \mathbf{P}_1 = (S_y - L_{f_y} L_{f_u}^{-1} S_u) \mathcal{K}, \quad \mathbf{P}_2 = L_{f_y} L_{f_u}^{-1},
    \label{eq:P1_P2_def}
\end{equation}
\textcolor{black}{where $P_1 \in \mathbb{R}^{ n_y N \times (n_u + n_y)T_{ini}} $, $P_2 \in \mathbb{R}^{n_y N \times n_u N}$. Then, it leads to the compact affine prediction model:}
\begin{equation}
    \hat{\mathbf{y}}_N(k) = \mathbf{P}_1 \mathbf{z}_{\text{ini},m}(k) + \mathbf{P}_2 \mathbf{u}_N(k).
    \label{eq:yN_predictor_compact}
\end{equation}

To ensure the mathematical rigor of the NTDPC formulation, we introduce the following assumptions and lemma.

\begin{assumption}\label{ass:noise_properties}\textcolor{black}{[Future Excitation and Reduced-Order Parameterization]
The future input sequence $u_N$ is sufficiently exciting such that
\[
\operatorname{rank}(Z_f V_2) = n_u N,
\]
where $V_2$ spans the nullspace of $Z_p$.
As a result, the reduced future latent variable $\eta_2 \in \mathbb{R}^{n_u N}$ provides a complete parameterization of all admissible future trajectories consistent with the past data.}
\end{assumption}

\begin{remark}
    The reduced future input mapping $L_{f_u} \in \mathbb{R}^{n_u N \times n_u N}$ is assumed to be full rank in the noise-free case, which is guaranteed provided the offline training data satisfies the condition of persistency of excitation. In the presence of noise or numerical degeneracy, a regularized pseudoinverse of $L_{f_u}$ is employed to compute the future latent variable $\eta_2$. This guarantees the existence and uniqueness of the multi-step predictor in practice.
\end{remark}

\begin{assumption} \label{ass:constraints_feasibility}
    The constraint sets $\mathcal{U} \subseteq \mathbb{R}^{n_u N}$ and $\mathcal{Y} \subseteq \mathbb{R}^{n_y N}$ are convex, closed, and non-empty.
\end{assumption}

\begin{lemma} \label{lem:predictor_unbiased}
    \textcolor{black}{Under Assumptions  \ref{ass:noise_properties} and invertible property of $L_{f_u}$, the predictor in (\ref{eq:yN_predictor_compact}) is an unbiased estimator of the true future output $\mathbf{y}_N^{\text{true}}(k)$ given the parameters $\eta_1(k)$ and $\mathbf{u}_N(k)$. That is, $\mathbb{E}[\hat{\mathbf{y}}_N(k) \mid \eta_1(k), \mathbf{u}_N(k)] = \mathbf{y}_N^{\text{true}}(k)$.}
\end{lemma}
\textcolor{black}{
\begin{proof}
    The true output of the system, given the latent state $\eta_1(k)$ and input $\mathbf{u}_N(k)$, is defined by the second line of \eqref{eq:uyN_SL_eta}: $\mathbf{y}_N^{\text{true}}(k) = S_y \eta_1(k) + L_{f_y} \eta_2(k)$. Substituting the exact solution for $\eta_2(k)$, we get:
    \begin{equation}
        \mathbf{y}_N^{\text{true}}(k) = \left( S_y - L_{f_y} L_{f_u}^{-1} S_u \right) \eta_1(k) + L_{f_y} L_{f_u}^{-1} \mathbf{u}_N(k).
    \end{equation}
    Taking the expectation of the predictor \eqref{eq:yN_predictor_final_form}, and noting that $\mathbf{u}_N(k)$ is deterministic and $\mathbb{E}[\hat{\eta}_1(k) \mid \eta_1(k)] = \eta_1(k)$ (from Theorem \ref{thm:black_eta1}):
    \begin{equation}
        \begin{split}
            \mathbb{E}[\hat{\mathbf{y}}_N(k)] &= \left( S_y - L_{f_y} L_{f_u}^{-1} S_u \right) \mathbb{E}[\hat{\eta}_1(k)] + L_{f_y} L_{f_u}^{-1} \mathbf{u}_N(k) \\
            &= \left( S_y - L_{f_y} L_{f_u}^{-1} S_u \right) \eta_1(k) + L_{f_y} L_{f_u}^{-1} \mathbf{u}_N(k) \\
            &= \mathbf{y}_N^{\text{true}}(k).
        \end{split}
    \end{equation}
    This confirms the predictor is unbiased. Since $\mathbf{P}_1$ and $\mathbf{P}_2$ rely only on the SVD of the offline data, they can be computed offline.
\end{proof}
}

We now formally state the NTDPC optimization problem.

\vspace{3mm}
\textbf{Problem 4 [NTDPC]}:
Given past input and measured output trajectories, NTDPC computes the optimal control sequence by solving the following convex optimization problem at each time step $k$:
\begin{align}
    \min_{\mathbf{u}_N(k), \mathbf{y}_N(k), \sigma_y(k)} \quad & \|\mathbf{y}_N(k)\|_Q^2 + \|\mathbf{u}_N(k)\|_R^2 + \|\sigma_y(k)\|_{\Lambda_y}^2 \notag \\
    \text{subject to} \quad & \mathbf{y}_N(k) = \mathbf{P}_1 \mathbf{z}_{\text{ini},m}(k) + \mathbf{P}_2 \mathbf{u}_N(k) + \sigma_y(k), \label{eq:NTDPC_constraint} \\
    & \mathbf{u}_N(k) \in \mathcal{U}, \quad \mathbf{y}_N(k) \in \mathcal{Y}, \notag
\end{align}
\textcolor{black}{where $\mathbf{y}_N(k) \in \mathbb{R}^{n_y N}$ represents the predicted future output used in the cost function, and $\sigma_y(k) \in \mathbb{R}^{n_y N}$ is a slack variable ensuring feasibility against noise and model mismatch. The weighting matrices $Q, R, \Lambda_y \succ 0$ are user-defined. The vector \textcolor{black}{$\mathbf{z}_{\text{ini},m}(k)$} contains the most recent $T_{\text{ini}}$ input and noisy output measurements. Algorithm 1 summarizes the implementation steps.}
\textcolor{black}{
\begin{remark}
    To prevent numerical bias during the SVD factorization of $\mathbf{Z}_p$ and $\mathbf{Z}_f V_2$, it is recommended to normalize the input and output data. This balances the singular values when inputs and outputs have significantly different magnitudes. We apply the scaling:
    \begin{equation}
        \mathbf{u}(k) \leftarrow M_u^{-1} \mathbf{u}(k), \quad \mathbf{y}(k) \leftarrow M_y^{-1} \mathbf{y}(k),
    \end{equation}
    where $M_u = \text{diag}(m_{u_1}, \dots, m_{u_{n_u}})$ and $M_y = \text{diag}(m_{y_1}, \dots, m_{y_{n_y}})$. In this study, the scaling factors are chosen as the mean absolute values of the signals: $m_{u_i} = \text{mean}(|\mathbf{u}_i|)$ and $m_{y_i} = \text{mean}(|\mathbf{y}_i|)$. The computed optimal control input $\mathbf{u}_N^*(k)$ must be denormalized before being applied to the plant.
\end{remark}
}

\begin{algorithm}
\caption{Implementation of NTDPC}
\label{alg:dp_constrained_consensus}
\begin{algorithmic}[1]
\REQUIRE Generate P.E input and measure output 
\STATE Normalize the input and output signals 
\STATE Construct $\mathbf{U}_p, \mathbf{Y}_p, \mathbf{U}_f, \mathbf{Y}_f$ 
\STATE Compute and extract $L_1,V_1,V_2,L_{f_1}$ using SVD
\STATE Calculate the matrices $\mathbf{P}_1,\mathbf{P}_2$
\STATE \textcolor{black}{Build $\mathbf{z}_{\text{ini},m}(k)$  using the past $T_{\text{ini}}$ data samples}
\STATE Obtain $\hat{\mathbf{y}}_N (k)= \mathbf{P}_1~{\mathbf{z}}_{\text{ini},m}(k) + \mathbf{P}_2~\mathbf{u}_N(k)$
\STATE Solve the QP problem (Problem 4) and obtain $\mathbf{u}_N^*(k)$
\STATE \textcolor{black}{Extract optimal first input $u^*(k)$ from $u_N^*(k)$}
\STATE \textcolor{black}{Denormalize $u^*(k) \leftarrow M_u u^*(k)$ and apply to system}
\end{algorithmic}
\end{algorithm} 
\vspace{-3mm}

\section{Stability Analysis}
\label{sec:stability}

This section establishes a rigorous stability analysis of the developed NTDPC approach. \textcolor{black}{We examine the system's behavior in both noise-free and noisy scenarios. In the noise-free case, we prove asymptotic stability, ensuring exact tracking of the reference. In the noisy case, we demonstrate Input-to-State Stability (ISS), guaranteeing robustness to bounded measurement noise. To facilitate this analysis, we define the following key concepts and notations, adapted from the dissipativity framework for data-driven control \cite{lazar2021dissipativity}.}
\textcolor{black}{
\begin{definition}[Stage Cost]\label{def:stage_cost}
    Assume the weighting matrices \( Q \) and \( R \) are block-diagonal with blocks \( Q_i \succ 0 \) and \( R_i \succ 0 \). The stage cost \( l(\cdot, \cdot) \) is defined as:
    \begin{equation}
        l(\mathbf{y}, \mathbf{u}) = \|\mathbf{y} - r_y\|_{Q_i}^2 + \|\mathbf{u} - r_u\|_{R_i}^2,
    \end{equation}
    where \( r_y \in \mathbb{R}^{n_y} \) and \( r_u \in \mathbb{R}^{n_u} \) denote the steady-state output and input references, respectively.
    The total cost function over the prediction horizon \( N \) is given by:
    \begin{equation}
        J^*(k) = \sum_{i=0}^{N-1} l(\mathbf{y}^*(i|k), \mathbf{u}^*(i|k)) + \|\sigma_y^*(k)\|_{\Lambda_y}^2,
    \end{equation}
    where \( \mathbf{y}^*(i|k) \) and \( \mathbf{u}^*(i|k) \) are the predicted output and input at step \( i \) given information at time \( k \).
\end{definition} 
}
\textcolor{black}{
\begin{definition}[Non-Minimal State] \label{def:non_minimal_state}
    The non-minimal state vector $\mathbf{z}_{\text{ini}}(k)$ is defined as:
\[
\mathbf{z}_{\text{ini}}(k) = \operatorname{col}(\mathbf{u}_{\text{ini}}(k), \mathbf{y}_{\text{ini}}(k)).
\]
The error state $\tilde{\mathbf{z}}_{\text{ini}}(k)$ represents the deviation from equilibrium:
\[
\tilde{\mathbf{z}}_{\text{ini}}(k) = \operatorname{col}(\mathbf{u}_{\text{ini}}(k) - \mathbf{r}_u, \mathbf{y}_{\text{ini}}(k) - \mathbf{r}_y),
\]
where $\mathbf{r}_u \in \mathbb{R}^{n_u T_{\text{ini}}}$ and $\mathbf{r}_y \in \mathbb{R}^{n_y T_{\text{ini}}}$ are stacked reference vectors.
\end{definition}
}

\begin{definition}[Class \(\mathcal{K}\) and \(\mathcal{KL}\) Functions]\label{def:class_k_kl}
    A continuous function \(\varphi : \mathbb{R}_+ \to \mathbb{R}_+\) belongs to class \(\mathcal{K}\) if it is strictly increasing and \(\varphi(0) = 0\). It belongs to class \(\mathcal{K}_{\infty}\) if it is also unbounded, i.e., \(\lim_{s \to \infty} \varphi(s) = \infty\). A function \(\beta : \mathbb{R}_+ \times \mathbb{R}_+ \to \mathbb{R}_+\) belongs to class \(\mathcal{KL}\) if \(\beta(\cdot, t) \in \mathcal{K}\) for each fixed \( t \geq 0 \), and \(\lim_{t \to \infty} \beta(s, t) = 0\) for each fixed \( s > 0 \). The identity function is denoted by \( \operatorname{id}(s) = s \).
\end{definition}

\begin{definition}[Filtration]\label{def:filtration}
    The filtration \( \mathcal{F}_k \) is the \(\sigma\)-algebra generated by the history of measurements \( \{\mathbf{z}_{\text{ini},m}(t)\}_{t=0}^k \), representing the information available to the controller up to time \( k \).
\end{definition}

\subsection{Noise-Free Case}
\textcolor{black}{We first analyze the stability of the NTDPC scheme in the ideal noise-free scenario (\( \mathbf{\zeta}_{\text{ini}}(k) = 0 \)) while tracking constant admissible references \( (r_y, r_u) \).}

\begin{assumption}[Positive Definiteness]\label{ass:positive_definiteness_cost}
    There exist functions \(\alpha_1, \alpha_2 \in \mathcal{K}_{\infty}\) such that for all admissible pairs \((\mathbf{y}, \mathbf{u})\):
    \begin{equation}
        \alpha_1(\|\operatorname{col}(\mathbf{y} - r_y, \mathbf{u} - r_u)\|) \leq l(\mathbf{y}, \mathbf{u}) \leq \alpha_2(\|\operatorname{col}(\mathbf{y} - r_y, \mathbf{u} - r_u)\|).
    \end{equation}
\end{assumption}

\begin{assumption}[Terminal Stabilizing Condition]\label{ass:terminal_condition_nf}
    For any admissible initial state \( \mathbf{z}_{\text{ini}}(k) \), there exists a prediction horizon \( N \geq T_{\text{ini}} \) and a terminal control law satisfying a contraction property. Specifically, there exists a function \(\rho \in \mathcal{K}_{\infty}\) with \(\rho < \operatorname{id}\) such that the terminal cost satisfies:
    \begin{equation}
        l(\hat{\mathbf{y}}(N|k), \hat{\mathbf{u}}(N|k)) \leq (\operatorname{id} - \rho) \circ l(\mathbf{y}(k - T_{\text{ini}}), \mathbf{u}(k - T_{\text{ini}})).
    \end{equation}
    \textcolor{black}{This condition ensures that the cost of the appended terminal state is strictly less than the cost of the oldest data point being discarded from the history window.}
\end{assumption}

\begin{assumption}[Feasibility]\label{ass:feasibility_nf}
    \textcolor{black}{The NTDPC optimization (Problem 4) is recursively feasible for all \( k \geq T_{\text{ini}} \). In the noise-free case, the optimal slack variable satisfies \( \sigma_y^*(k) = 0 \). Furthermore, the equilibrium pair \( (r_y, r_u) \) lies within the interior of the constraint sets \( \mathcal{Y} \times \mathcal{U} \).}
\end{assumption}

\begin{theorem}\label{thm:stability_nf}
    Under Assumptions \ref{ass:positive_definiteness_cost}, \ref{ass:terminal_condition_nf}, and \ref{ass:feasibility_nf}, \textcolor{black}{consider the closed-loop system controlled by NTDPC in the absence of noise (\( \mathbf{\zeta}_{\text{ini}}(k) = 0 \)). The equilibrium \( (r_y, r_u) \) is asymptotically stable, implying \( \lim_{k \to \infty} \mathbf{y}(k) = r_y \) and \( \lim_{k \to \infty} \mathbf{u}(k) = r_u \).}
\end{theorem}

\begin{proof}
    \textcolor{black}{We construct a Lyapunov candidate using the optimal cost function augmented with a history storage function. Let the optimal cost at time \( k \) be \( J^*(k) \). Since \( \sigma_y^*(k) = 0 \) (Assumption \ref{ass:feasibility_nf}), \( J^*(k) = \sum_{i=0}^{N-1} l(\mathbf{y}^*(i|k), \mathbf{u}^*(i|k)) \).}
    
   \textcolor{black}{ Define the auxiliary history function \( W(\mathbf{z}_{\text{ini}}(k)) \) as the accumulated cost of the past \( T_{\text{ini}} \) steps:
    \begin{equation}
        W(\mathbf{z}_{\text{ini}}(k)) = \sum_{i=1}^{T_{\text{ini}}} l(\mathbf{y}(k - i), \mathbf{u}(k - i)).
    \end{equation}
    Consider the Lyapunov function candidate:
    \begin{equation}
        V(\mathbf{z}_{\text{ini}}(k)) = J^*(k) + W(\mathbf{z}_{\text{ini}}(k)).
    \end{equation}}
   \textcolor{black}{ 
    Using Assumption \ref{ass:positive_definiteness_cost}, the history function is bounded below by:
    \begin{equation}
        W(\mathbf{z}_{\text{ini}}(k)) \geq \sum_{i=1}^{T_{\text{ini}}} \alpha_1(\|\operatorname{col}(\mathbf{y}(k - i) - r_y, \mathbf{u}(k - i) - r_u)\|).
    \end{equation}}
    \textcolor{black}{Since the error state \( \tilde{\mathbf{z}}_{\text{ini}}(k) \) is composed of these deviations, there exists \( \alpha_{1,V} \in \mathcal{K}_{\infty} \) such that \( V(\mathbf{z}_{\text{ini}}(k)) \geq W(\mathbf{z}_{\text{ini}}(k)) \geq \alpha_{1,V}(\|\tilde{\mathbf{z}}_{\text{ini}}(k)\|) \).
    Similarly, using the upper bound \( \alpha_2 \) and the linearity of the predictor \( \hat{\mathbf{y}}_N(k) = \mathbf{P}_1 \mathbf{z}_{\text{ini}}(k) + \mathbf{P}_2 \mathbf{u}_N(k) \), the optimal cost \( J^*(k) \) is bounded by the state magnitude. Thus, there exists \( \alpha_{2,V} \in \mathcal{K}_{\infty} \) such that \( V(\mathbf{z}_{\text{ini}}(k)) \leq \alpha_{2,V}(\|\tilde{\mathbf{z}}_{\text{ini}}(k)\|) \).}    
    
    At time \( k+1 \), we construct a feasible suboptimal input sequence \( \mathbf{u}_s(k+1) \) by shifting the optimal sequence from time \( k \) and appending the terminal law \( \hat{\mathbf{u}}(N|k) \):
    \begin{equation}
        \mathbf{u}_s(k+1) = \operatorname{col}(\mathbf{u}^*(1|k), \ldots, \mathbf{u}^*(N-1|k), \hat{\mathbf{u}}(N|k)).
    \end{equation}
    The corresponding cost is \( J_s(k+1) = J^*(k) - l(\mathbf{y}^*(0|k), \mathbf{u}^*(0|k)) + l(\hat{\mathbf{y}}(N|k), \hat{\mathbf{u}}(N|k)) \).
    By optimality, \( J^*(k+1) \leq J_s(k+1) \).
    
    \textcolor{black}{Now consider the evolution of the history function \( W \):
    \begin{align}
        W(\mathbf{z}_{\text{ini}}(k+1)) &= \sum_{i=1}^{T_{\text{ini}}} l(\mathbf{y}(k + 1 - i), \mathbf{u}(k + 1 - i)) \notag \\
        &= W(\mathbf{z}_{\text{ini}}(k)) + l(\mathbf{y}(k), \mathbf{u}(k)) - l(\mathbf{y}(k - T_{\text{ini}}), \mathbf{u}(k - T_{\text{ini}})).
    \end{align}
    Note that in the noise-free prediction, \( \mathbf{y}^*(0|k) = \mathbf{y}(k) \) and \( \mathbf{u}^*(0|k) = \mathbf{u}(k) \).
    Computing the difference \( \Delta V(k) = V(\mathbf{z}_{\text{ini}}(k+1)) - V(\mathbf{z}_{\text{ini}}(k)) \):
    \begin{align}
        \Delta V(k) &\leq [J^*(k) - l(\mathbf{y}(k), \mathbf{u}(k)) + l(\hat{\mathbf{y}}(N|k), \hat{\mathbf{u}}(N|k))] \notag \\
        &\quad + [W(\mathbf{z}_{\text{ini}}(k)) + l(\mathbf{y}(k), \mathbf{u}(k)) - l(\mathbf{y}(k - T_{\text{ini}}), \mathbf{u}(k - T_{\text{ini}}))] \notag \\
        &\quad - [J^*(k) + W(\mathbf{z}_{\text{ini}}(k))] \notag \\
        &= l(\hat{\mathbf{y}}(N|k), \hat{\mathbf{u}}(N|k)) - l(\mathbf{y}(k - T_{\text{ini}}), \mathbf{u}(k - T_{\text{ini}})).
    \end{align}
    Applying Assumption \ref{ass:terminal_condition_nf} (Terminal Stabilizing Condition):
    \begin{equation}
        \Delta V(k) \leq -\rho \circ l(\mathbf{y}(k - T_{\text{ini}}), \mathbf{u}(k - T_{\text{ini}})).
    \end{equation}
    Since \( l(\cdot) \) is positive definite with respect to the deviation from equilibrium, this implies \( \Delta V(k) \) is negative definite in terms of the error state. Standard Lyapunov arguments \cite{jiang2001input} then guarantee that \( \|\tilde{\mathbf{z}}_{\text{ini}}(k)\| \to 0 \) as \( k \to \infty \), confirming asymptotic stability.}
\end{proof}

We assume that the system is stabilizable and that an admissible input sequence exists that maintains feasibility beyond the prediction horizon.

\subsection{Noisy Case: Robustness to Measurement Noise}

We now extend the analysis to the case where the measurements are corrupted by noise (\(\mathbf{\zeta}_{\text{ini}}(k) \neq 0\)). We prove that the NTDPC closed-loop system is Input-to-State Stable (ISS) with respect to the measurement noise \(\mathbf{\zeta}_{\text{ini}}(k)\), utilizing the stochastic stability framework for MPC \cite{jiang2001input, verheijen2023handbook}.

\begin{assumption}[Bounded Noise Variance]\label{ass:bounded_noise}
    \textcolor{black}{The measurement noise is a martingale difference sequence with bounded variance. Specifically, there exists a constant \(\bar{c} > 0\) such that for all \( k \geq 0 \):
    \begin{equation}
        \mathbb{E}[\|\mathbf{\zeta}_{\text{ini}}(k)\|^2 \mid \mathcal{F}_k] \leq \bar{c},
    \end{equation}
    where \(\mathcal{F}_k\) denotes the filtration generated by measurements up to time \(k\).}
\end{assumption}

\begin{assumption}\textcolor{black}{[Lipschitz Continuity of the Predictor]\label{ass:lipschitz_predictor}
    The predictor mapping is Lipschitz continuous with respect to the initial condition. There exists a constant \(L_P > 0\) such that for any two initial conditions \(\mathbf{z}_1, \mathbf{z}_2\):
    \begin{equation}
        \|\mathbf{P}_1 (\mathbf{z}_1 - \mathbf{z}_2)\| \leq L_P \|\mathbf{z}_1 - \mathbf{z}_2\|.
    \end{equation}
    Consequently, the deviation in the slack variable due to noise is linearly bounded:
    \begin{equation}
        \|\sigma_{y,s}(k)\| \leq L_P \|\mathbf{\zeta}_{\text{ini}}(k)\|,
    \end{equation}
    where \(\mathbf{z}_{\text{ini},m}(k) = \mathbf{z}_{\text{ini}}(k) + \operatorname{col}(0, \mathbf{\zeta}_{\text{ini}}(k))\).}
\end{assumption}

\begin{assumption}[Stochastic Terminal Stabilizing Condition]\label{ass:terminal_condition_noisy}
    For any admissible state, there exists a terminal control law \(\hat{\mathbf{u}}(N|k)\) and a class \(\mathcal{K}_\infty\) function \(\rho < \operatorname{id}\) such that the expected terminal cost satisfies a contraction:
    \begin{equation}
        \mathbb{E}[l(\hat{\mathbf{y}}(N|k), \hat{\mathbf{u}}(N|k)) \mid \mathcal{F}_k] \leq (\operatorname{id} - \rho) \circ \mathbb{E}[l(\mathbf{y}(k - T_{\text{ini}}), \mathbf{u}(k - T_{\text{ini}})) \mid \mathcal{F}_k].
    \end{equation}
\end{assumption}

\begin{assumption}\textcolor{black}{[Recursive Feasibility and Slack Bound]\label{ass:feasibility_noisy}
Problem 4 is feasible with probability at least $1-\epsilon$. Furthermore, provided the slack penalty weighting matrix $\Lambda_y$ is chosen sufficiently large relative to the noise covariance, the optimal slack variable $\sigma_y^*(k)$ satisfies a quadratic bound with respect to the noise realization:
\begin{equation}
    \mathbb{E}[||\sigma_y^*(k)||_{\Lambda_y}^2 | \mathcal{F}_k] \le c_\sigma \mathbb{E}[||\zeta_{ini}(k)||^2 | \mathcal{F}_k],
\end{equation}
for some constant $c_\sigma > 0$.}
\end{assumption}

\begin{remark}
    Assumption \ref{ass:terminal_condition_noisy} is typically satisfied by selecting \(\hat{\mathbf{u}}(N|k)\) as the steady-state reference \(r_u\), utilizing the inherent robustness of the predictor established in Lemma \ref{lem:predictor_unbiased}. Assumption \ref{ass:feasibility_noisy} is empirically supported by the sensitivity index analysis ($I_s \leq 0.7$), which ensures sufficient separability between signal and noise subspaces to maintain feasibility.
\end{remark}

\begin{theorem}\label{thm:stability_noisy}
    Under Assumptions \ref{ass:positive_definiteness_cost}, \ref{ass:bounded_noise}, \ref{ass:lipschitz_predictor}, \ref{ass:terminal_condition_noisy}, and \ref{ass:feasibility_noisy}, the closed-loop system generated by NTDPC is Input-to-State Stable (ISS) with respect to the noise sequence \(\mathbf{\zeta}_{\text{ini}}(k)\). Specifically, there exist functions \(\beta \in \mathcal{KL}\) and \(\gamma \in \mathcal{K}_{\infty}\) such that:
    \begin{equation}
        \mathbb{E}[\|\tilde{\mathbf{z}}_{\text{ini}}(k)\|] \leq \beta(\|\tilde{\mathbf{z}}_{\text{ini}}(0)\|, k) + \gamma\left(\sup_{0 \leq t < k} \|\mathbf{\zeta}_{\text{ini}}(t)\|_{\infty}\right).
    \end{equation}
\end{theorem}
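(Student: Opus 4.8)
The plan is to mirror the dissipativity-based argument of Theorem~\ref{thm:stability_nf}, but to upgrade the strict Lyapunov decrease to an ISS-type decrease in conditional expectation and then invoke a stochastic ISS comparison result. I would reuse the same storage function $V(\mathbf{z}_{\text{ini}}(k)) = J^*(k) + W(\mathbf{z}_{\text{ini}}(k))$, with $W$ the running sum of the last $T_{\text{ini}}$ stage costs. The lower sandwich bound $\alpha_{1,V}(\|\tilde{X}_{\text{ini}}(k)\|) \le V(\mathbf{z}_{\text{ini}}(k))$ carries over verbatim since $J^*(k) \ge 0$ and $W$ is lower-bounded via Assumption~\ref{ass:positive_definiteness_cost}. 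For the upper bound, $J^*(k)$ now contains the slack penalty $\|\sigma_y^*(k)\|_{\Lambda_y}^2$, which by Assumption~\ref{ass:feasibility_noisy} is of order $\mathbb{E}[\|\mathbf{\zeta}_{\text{ini}}(k)\|^2 \mid \mathcal{F}_k]$, so that $\mathbb{E}[V(\mathbf{z}_{\text{ini}}(k)) \mid \mathcal{F}_k] \le \alpha_{2,V}(\|\tilde{X}_{\text{ini}}(k)\|) + c_\sigma\, \mathbb{E}[\|\mathbf{\zeta}_{\text{ini}}(k)\|^2 \mid \mathcal{F}_k]$; this extra disturbance term is harmless for the ISS-Lyapunov characterization.

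For the descent step I would again build a candidate at time $k+1$ by shifting the optimal sequence, $\mathbf{u}_s(k+1) = \mathrm{col}(u^*(1|k),\dots,u^*(N-1|k),\hat u(N|k))$ with $\hat u(N|k) = r_u$, taking $\mathbf{y}_s(k+1)$ the predictor output through $\mathbf{P}_1,\mathbf{P}_2$ evaluated at the new, noisy $\mathbf{z}_{\text{ini},m}(k+1)$. The mismatch between this shifted output and the true trajectory is exactly $\mathbf{P}_1(\mathbf{z}_{\text{ini},m}(k+1) - \mathbf{z}_{\text{ini}}(k+1))$, so by Assumption~\ref{ass:lipschitz_predictor} a slack of norm at most $L_P\|\mathbf{\zeta}_{\text{ini}}(k+1)\|$ restores feasibility — this is where Assumption~\ref{ass:feasibility_noisy} (feasibility with probability $1-\epsilon$ together with the slack bound) is invoked. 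Evaluating $J_s(k+1)$, the telescoping of $W$ proceeds as in \eqref{eq:tt2}, the loss of $l(y^*(0|k),u^*(0|k))$ cancels as before, the terminal term $\mathbb{E}[l(\hat y(N|k),\hat u(N|k)) \mid \mathcal{F}_k]$ is dominated by $(\mathrm{id}-\rho)\circ \mathbb{E}[l(y(k-T_{\text{ini}}),u(k-T_{\text{ini}})) \mid \mathcal{F}_k]$ via Assumption~\ref{ass:terminal_condition_noisy}, and the slack penalty adds a $\Lambda_y$-weighted contribution of order $L_P^2\|\mathbf{\zeta}_{\text{ini}}(k+1)\|^2$. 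Collecting these and using $l(y(k-T_{\text{ini}}),u(k-T_{\text{ini}})) \ge \alpha_1(\|\mathrm{col}(y(k-T_{\text{ini}})-r_y, u(k-T_{\text{ini}})-r_u)\|)$ yields a dissipation inequality
\[
\mathbb{E}[V(\mathbf{z}_{\text{ini}}(k+1)) \mid \mathcal{F}_k] - V(\mathbf{z}_{\text{ini}}(k)) \le -\alpha_{1,V}(\|\tilde{X}_{\text{ini}}(k)\|) + \gamma_0\!\left(\sup_{t}\|\mathbf{\zeta}_{\text{ini}}(t)\|_\infty\right),
\]
with $\gamma_0 \in \mathcal{K}_\infty$ absorbing the slack and predictor-error contributions and the bounded-variance constant $\bar c$ of Assumption~\ref{ass:bounded_noise}.

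Finally, with $V$ serving as an ISS-Lyapunov function in conditional expectation — proper up to a disturbance term and satisfying the inequality above — I would apply a stochastic ISS / comparison argument in the spirit of \cite{jiang2001input}: iterate the inequality, take total expectations, and split the accumulated disturbance into a geometrically decaying transient and a residual $\mathcal{K}_\infty$ function of $\sup_{0\le t<k}\|\mathbf{\zeta}_{\text{ini}}(t)\|_\infty$, then compose with $\alpha_{1,V}^{-1}$ and $\alpha_{2,V}$ to transfer the bound from $V$ to $\|\tilde{X}_{\text{ini}}(k)\|$, obtaining suitable $\beta \in \mathcal{KL}$ and $\gamma \in \mathcal{K}_\infty$. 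The main obstacle I anticipate is the feasibility and slack bookkeeping in the descent step: one must guarantee that the shifted candidate, after adding the noise-compensating slack, is genuinely admissible with respect to $\mathcal{U},\mathcal{Y}$ with the stated probability, and that the induced slack is bounded linearly by the current noise — this is exactly what Assumptions~\ref{ass:lipschitz_predictor} and~\ref{ass:feasibility_noisy} bundle together, and making the constants explicit so that $\gamma_0$ (hence $\gamma$) is a bona fide $\mathcal{K}_\infty$ function rather than merely finite is the delicate part. A secondary subtlety is reconciling the conditional-expectation decrease with the mixed stochastic/deterministic form of the claimed bound, which forces the comparison argument to be run directly on $\mathbb{E}[\|\tilde{X}_{\text{ini}}(k)\|]$ while letting the noise enter only through its running supremum.
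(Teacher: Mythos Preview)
Your proposal is correct and follows essentially the same approach as the paper's proof: the same storage function $V = J^* + W$ in conditional expectation, the same shifted-candidate construction with a noise-compensating slack bounded via Assumptions~\ref{ass:lipschitz_predictor} and~\ref{ass:feasibility_noisy}, the same use of Assumption~\ref{ass:terminal_condition_noisy} for the terminal contraction, and the same final appeal to a stochastic ISS comparison result in the style of \cite{jiang2001input}. If anything, your bookkeeping of the $l(y^*(0|k),u^*(0|k))$ cancellation is cleaner than the paper's, which retains an extra $\mathbb{E}[l(y^*(0|k),u^*(0|k))\mid\mathcal{F}_k]$ term and then absorbs it into an additional $\alpha_5\in\mathcal{K}_\infty$ on the right-hand side.
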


\begin{proof}
    \textcolor{black}{Consider the stochastic Lyapunov function candidate:
    \begin{equation}
        V(k) = \mathbb{E}[J^*(k) + W(\mathbf{z}_{\text{ini}}(k)) \mid \mathcal{F}_k].
    \end{equation}
    }
   \textcolor{black}{ 
    From the noise-free analysis, \(W(\mathbf{z}_{\text{ini}}(k))\) is lower-bounded by \(\alpha_{1,W}(\|\tilde{\mathbf{z}}_{\text{ini}}(k)\|)\).
    The optimal cost \(J^*(k)\) differs from the noise-free case by the slack term. Using Assumption \ref{ass:feasibility_noisy} and the boundedness of the noise:
    \begin{equation}
        \mathbb{E}[J^*(k) \mid \mathcal{F}_k] \leq \alpha_{2,J}(\|\tilde{\mathbf{z}}_{\text{ini}}(k)\|) + c_\sigma \bar{c}.
    \end{equation}
    Thus, \(V(k)\) satisfies the sandwich bounds:
    \begin{equation}
        \alpha_{1,V}(\|\tilde{\mathbf{z}}_{\text{ini}}(k)\|) \leq V(k) \leq \alpha_{2,V}(\|\tilde{\mathbf{z}}_{\text{ini}}(k)\|) + c_\sigma \bar{c}.
    \end{equation}
    }
    
   \textcolor{black}{
    We construct the standard suboptimal sequence \(\mathbf{u}_s(k+1)\) by shifting the optimal input and appending the terminal law. The slack variable for this suboptimal solution, \(\sigma_{y,s}(k+1)\), absorbs the prediction error due to noise. From Assumption \ref{ass:lipschitz_predictor}, its expected cost is bounded:
    \begin{equation}
        \mathbb{E}[\|\sigma_{y,s}(k+1)\|_{\Lambda_y}^2 \mid \mathcal{F}_k] \leq \lambda_{\max}(\Lambda_y) L_P^2 \bar{c}.
    \end{equation}
    Define \( \bar{C}_{\text{noise}} = (c_\sigma + \lambda_{\max}(\Lambda_y) L_P^2)\bar{c} \). The expected difference in the Lyapunov function is:
    \begin{align}
        \mathbb{E}[V(k+1) - V(k) \mid \mathcal{F}_k] &\leq \mathbb{E}\left[ l(\hat{\mathbf{y}}(N|k), \hat{\mathbf{u}}(N|k)) - l(\mathbf{y}(k-T_{\text{ini}}), \mathbf{u}(k-T_{\text{ini}})) \right. \notag \\
        &\quad \left. + l(\mathbf{y}(k), \mathbf{u}(k)) - l(\mathbf{y}^*(0|k), \mathbf{u}^*(0|k)) \mid \mathcal{F}_k \right] + \bar{C}_{\text{noise}}.
    \end{align}
    The first line is bounded by \(-\rho(\cdot)\) via Assumption \ref{ass:terminal_condition_noisy}.
    The crucial term is the difference in the second line: \( \Delta_{\text{stage}} = l(\mathbf{y}(k), \mathbf{u}(k)) - l(\mathbf{y}^*(0|k), \mathbf{u}^*(0|k)) \).
    In the noise-free case, \( \mathbf{y}(k) = \mathbf{y}^*(0|k) \), so this term vanishes. In the noisy case, the measured output \( \mathbf{y}(k) \) differs from the predicted output \( \mathbf{y}^*(0|k) \) due to the noise realization at time \( k \). Since \( l(\cdot) \) is locally Lipschitz with constant \( L_l \) on the compact constraint set:
    \begin{equation}
        \mathbb{E}[\Delta_{\text{stage}} \mid \mathcal{F}_k] \leq L_l \mathbb{E}[\|\mathbf{y}(k) - \mathbf{y}^*(0|k)\| \mid \mathcal{F}_k] \leq L_l \mathbb{E}[\|\mathbf{\zeta}(k)\| \mid \mathcal{F}_k] \leq L_l \sqrt{\bar{c}}.
    \end{equation}
    Combining these results:
    \begin{equation}
        \mathbb{E}[V(k+1) - V(k) \mid \mathcal{F}_k] \leq -\rho(\alpha_{4,V}(\mathbb{E}[\|\tilde{\mathbf{z}}_{\text{ini}}(k)\|])) + \bar{C}_{\text{total}},
    \end{equation}
    where \( \bar{C}_{\text{total}} \) aggregates all noise-dependent constants. This inequality confirms that the Lyapunov function decreases outside a bounded region determined by the noise level, which is the definition of ISS \cite{jiang2001input}.}
\end{proof}

\section{Numerical Simulations}

Consider the discrete-time linear model system we aim to control, characterized by the following matrices \cite{camacho2007introduction}, Section 6.4:
\begin{equation}
\begin{split}
& A =
\begin{bmatrix}
0.9997 & 0.0038 & -0.0001 & -0.0322 \\
-0.0056 & 0.9648 & 0.7446 & 0.0001 \\
0.0020 & -0.0097 & 0.9543 & -0.0000 \\
0.0001 & -0.0005 & 0.0978 & 1.0000
\end{bmatrix},
\quad\\&
B =
\begin{bmatrix}
0.0010 & 0.1000 \\
-0.0615 & 0.0183 \\
-0.1133 & 0.0586 \\
-0.0057 & 0.0029
\end{bmatrix},
C =
\begin{bmatrix}
1.0000 & 0 & 0 & 0 \\
0 & -1.0000 & 0 & 7.7400
\end{bmatrix}
\end{split}
\end{equation}
These matrices are obtained by discretizing the dynamics of the longitudinal motion of a Boeing 747 using a zero-order hold with a sampling time $T_s = 0.1 \text{ s}$. The system has two inputs: the throttle $u_1$ and the elevator angle $u_2$, and two outputs: the aircraft's longitudinal velocity and climb rate.

The control objective is to increase the longitudinal velocity ($y_1(t)$) from 0 ft/s to 10 ft/s while maintaining the climb rate ($y_2(t)$) at zero. This corresponds to a reference output $r_y = \begin{bmatrix} 10 & 0 \end{bmatrix}^T$. The inputs and outputs are constrained as follows:
\begin{equation}
\begin{split}
&U := \left\{ u \in \mathbb{R}^2 : \begin{bmatrix} -20 \\ -20 \end{bmatrix} \leq u \leq \begin{bmatrix} 20 \\ 20 \end{bmatrix} \right\},
\quad \\&
Y := \left\{ y \in \mathbb{R}^2 : \begin{bmatrix} -25 \\ -15 \end{bmatrix} \leq y \leq \begin{bmatrix} 25 \\ 15 \end{bmatrix} \right\}.
\end{split}
\end{equation}
 
In this simulation, we set $T_{\text{ini}} =N= 20$, data length for Hankel matrices $M = 2500$.

\subsection{Noise-free case}

The Fig.\ref{fig:Y11} illustrates the performance of NTDPC in the noise-free case ($\mathbf{\zeta}_{\text{ini}}(k) = 0$) when tracking constant admissible references ($(r_y, r_u)$), as analyzed in Section 5.1. The results are consistent with the asymptotic stability proven in Theorem \ref{thm:stability_nf}, where $y(k) \to r_y$ and $u(k) \to r_u$ as $k \to \infty$ under Assumptions \ref{ass:positive_definiteness_cost}, \ref{ass:terminal_condition_nf}, and \ref{ass:feasibility_nf}. The top plot displays the primary output $y_1(k)$ for NTDPC (black), SPC (red), and SMMPC (green) over time, with the dashed line representing the reference $r_y$; all methods converge to $r_y$, but NTDPC exhibits the fastest transient response and minimal overshoot, reflecting the effectiveness of the data-driven predictor and terminal condition. The second plot shows a secondary output $y_2(k)$, with similar convergence behavior, where NTDPC (black) stabilizes more quickly than SPC (red) and SMMPC (green), supporting the robustness of the NTDPC framework to initial conditions encoded in $\mathbf{z}_{\text{ini}}(k)$. The third plot presents the primary input $u_1(k)$, where NTDPC (black) adjusts rapidly from an initial value, stabilizing near $r_u$ (dashed line) with less oscillation compared to SPC (red) and SMMPC (green), consistent with the bounded input constraints in $\mathcal{U}$. Finally, the bottom plot depicts a secondary input $u_2(k)$, showing NTDPC (black) achieving a steady state closer to $r_u$ with reduced variability, further validating the stability properties derived from the Lyapunov function $V(\mathbf{z}_{\text{ini}}(k))$. Overall, the convergence of outputs and inputs to their respective references demonstrates the asymptotic stability of NTDPC, as predicted by the storage function analysis $V(\mathbf{z}_{\text{ini}}(k)) = J^*(k) + W(\mathbf{z}_{\text{ini}}(k))$, and the superior performance of NTDPC over SPC and SMMPC highlights the advantage of the noise-tolerant predictor and optimization framework, particularly in the absence of noise.
    \begin{figure}[!ht] 
\centering
\includegraphics[width=350pt]{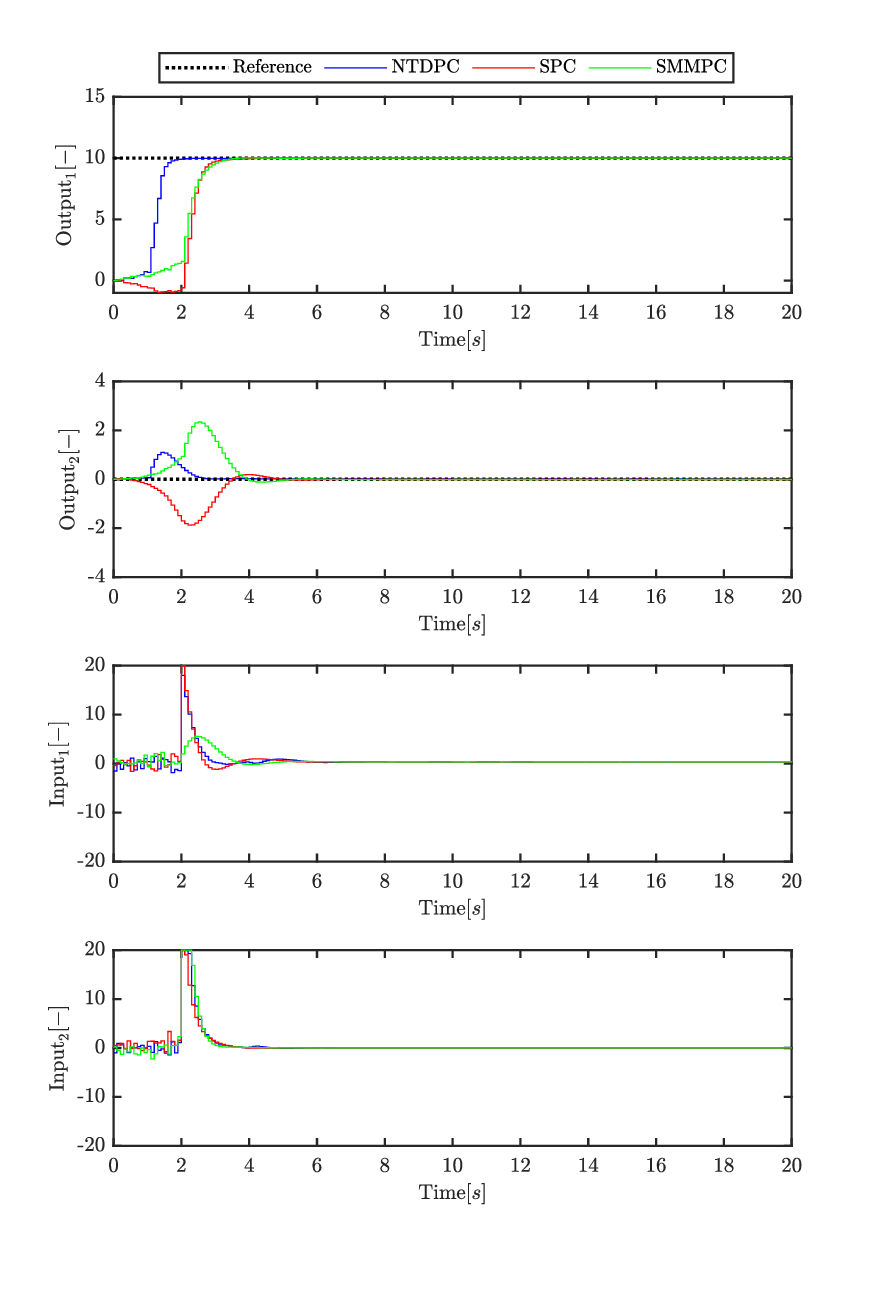}
\vspace{-1cm}
\caption{Output trajectories for noise-free case}
\label{fig:Y11}
\end{figure}

\subsection{Noisy case}
 In Fig. \ref{fig:Y1}, the simulation results demonstrate the performance of NTDPC in comparison to SPC and SMMPC under the noise effect, $\sigma_{\zeta}^2=0.25I$. While SMMPC exhibits aggressive and unstable control actions leading to poor regulation of outputs, NTDPC and SPC show reasonable results, despite an initial overshoot and undershoot, ultimately achieves stable tracking of the reference for both outputs with damped oscillations and less aggressive control inputs that settle over time. This balanced response in both outputs and the settling behavior of the control inputs highlight the effectiveness and stability of the NTDPC and SPC strategies compared to the oscillatory and poorly regulated responses observed with SMMPC. 
    \begin{figure}[!ht] 
\centering
\includegraphics[width=350pt]{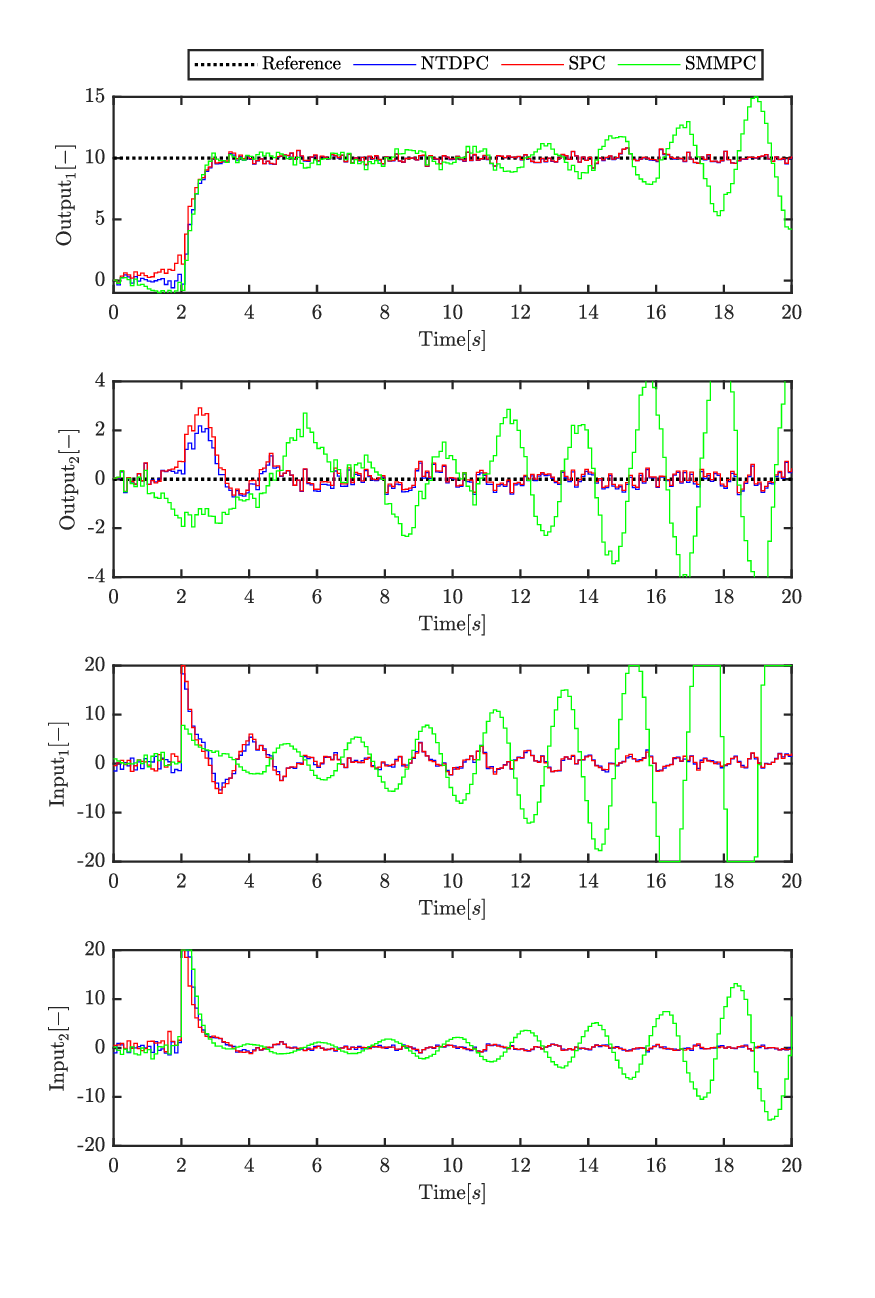}
\vspace{-1cm}
\caption{Output trajectories over 50 Monte Carlo runs for noisy case}
\label{fig:Y1}
\end{figure}

To quantify controller performance, the objective function is evaluated and plotted in Fig. \ref{fig:ISE2} which reveals the superiority of NTDPC and SPC compared to  SMMPC. It is worth noting that there is a slight difference between NTDPC and SPC throughout the entire simulation, highlighting the effectiveness of the designed NTDPC. 
In summary, the figures provide a comprehensive comparison of NTDPC, SPC and SMMPC control strategies, highlighting their performance in terms of output and input trajectories and their respective variability. This comparison aids in understanding the strengths and weaknesses of each method in maintaining desired control objectives amidst system noise and constraints.
\begin{figure}[!ht] 
\centering
\includegraphics[width=300pt]{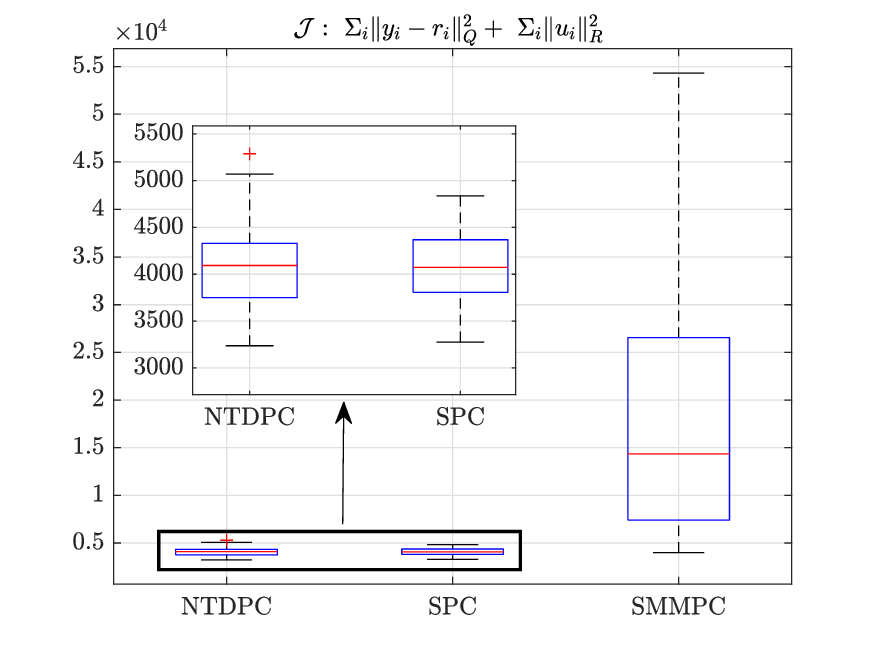}
\caption{Performance index over 50 Monte Carlo runs for noisy case}
 \label{fig:ISE2}
\end{figure}

\subsection{Sensitivity Evaluation}

Increasing horizon lengths improves estimation quality but increases computational cost and adaptation time in time-varying systems. We propose a sensitivity index to guide parameter selection under various noise conditions. When using noisy data, matrix $\Sigma_2$ in \eqref{eq:ss} becomes non-zero. Ideally, $\Sigma_2$'s singular values should be noise-related and significantly smaller than those of $\Sigma_1$. In practice, both matrices contain signal and noise components. Figure \ref{fig:singularvalues} shows singular value patterns for $\Sigma_1$ and $\Sigma_2$ with $\sigma_{\zeta}^2 = 0.01 I$ and $T_{\text{ini}} = N = 15$, with a vertical line separating the two sets.

    \begin{figure}[!ht]
\centering
\includegraphics[width=300pt]{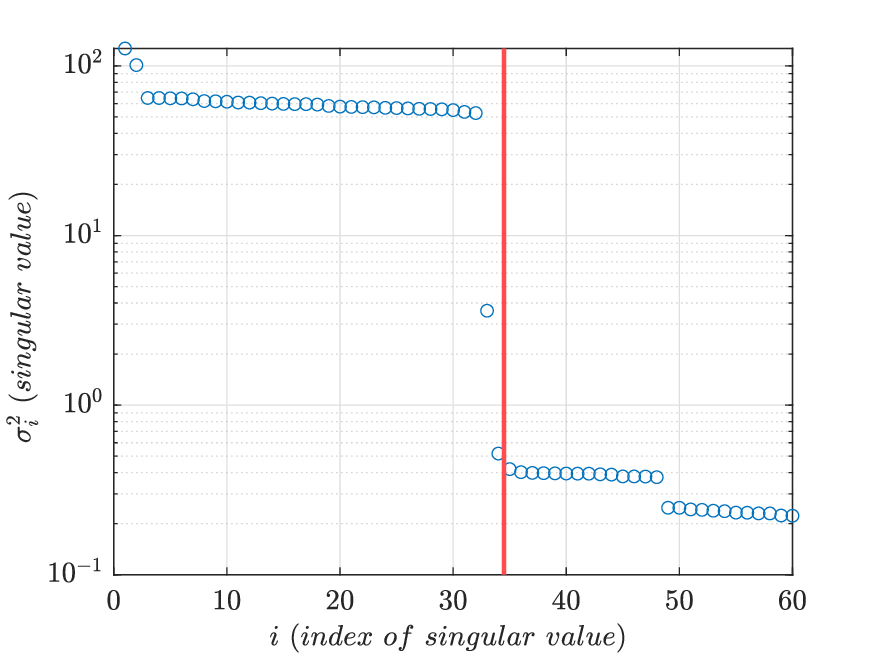}
\caption{Singular value distribution for $T_{\text{ini}}=N=15$ and $\sigma_{\zeta}^2=0.01 I$}
\label{fig:singularvalues}
\end{figure}

To quantify signal-noise separability, we propose:
\begin{equation}
    I_s=\frac{\sigma_{\text{max}}^2(\Sigma_2)}{\sigma_{\text{min}}^2(\Sigma_1)}
    \label{eq:I_s}
\end{equation}
Lower $I_s$ values indicate better signal-noise separability. Figure \ref{fig:ISF} shows $I_s$ behavior as $T_{\text{ini}}$ increases from 10 to 50 with noise variances from 0.01 to 0.32. Empirically, algorithm convergence requires $I_s \lesssim 0.7$. This confirms higher horizons are needed for higher noise variances and guides horizon selection. Each point represents the mean from 50 Monte Carlo simulations, with shading indicating one standard deviation. Since these analyses can be performed before controller implementation (or offline for LTI systems), this approach provides practical guidance for horizon selection.
    \begin{figure}[!ht] 
\centering
\includegraphics[width=300pt]{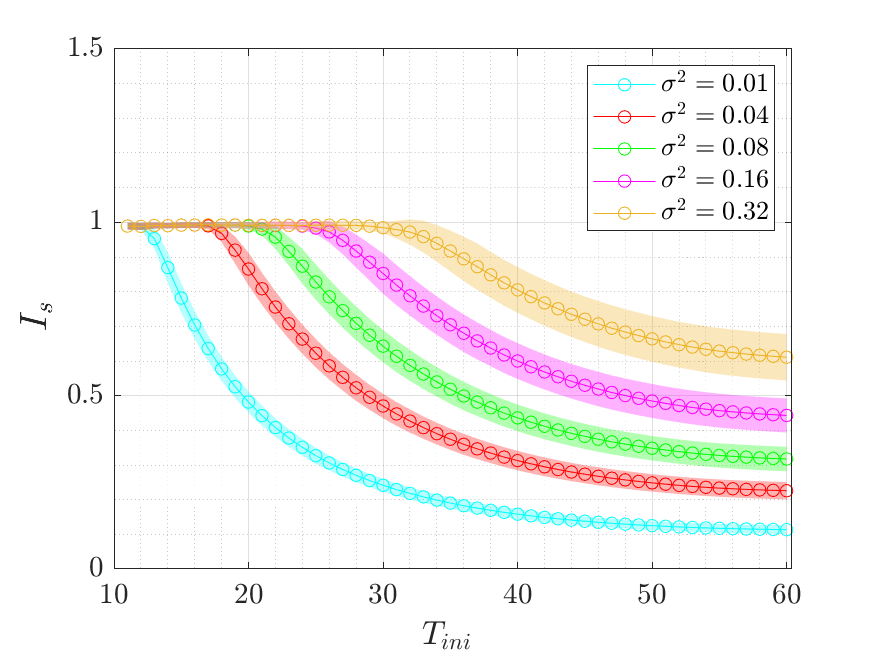}
\caption{Sensitivity index for different noises and $T_{\text{ini}}$ values}
 \label{fig:ISF}
\end{figure}

\subsection{Computational Complexity}
In this section, we will evaluate computational complexity of the developed NTDPC compared to SPC. 
In the context of a time-varying system, the prediction model needs to be updated at each sample time. Comparing the computational effort for updating this model between SPC and the NTDPC, focusing solely on the more costly calculations of pseudoinverse and SVD, reveals the following for a simplified SISO system with input/output horizons $T_{\text{ini}} = N = T_{h}$ (using $T_h$ to avoid confusion with data length $T$) and a data window length $M$:

For SPC, estimating the prediction matrices involves computing the pseudoinverse of the Hankel matrix $\begin{bmatrix} \mathbf{U}_p & \mathbf{Y}_p & \mathbf{U}_f \end{bmatrix}^T$ (or its non-transposed version depending on formulation), which has dimensions $3T_h \times M$. Assuming $M \gg 3T_h$, the computational complexity of the pseudoinverse is approximately $\mathcal{O}((3T_h)^2 M) = \mathcal{O}(9T_h^2 M)$.

For the NTDPC method, the matrix estimation per sample time primarily involves two SVD computations: one for the matrix $Z_p$ (dimensions $2T_h \times M$) and another for the matrix $Z_f V_2$ (dimensions approximately $2T_h \times (M - T_h - n)$, where $n$ is the system order, here simplified to $2T_h \times (M-T_h)$). The computational complexity of the SVDs are approximately $\mathcal{O}((2T_h)^2 M)$ and $\mathcal{O}((2T_h)^2 (M - T_h - n))$, respectively, assuming the number of rows is less than the number of columns in both cases ($2T_h < M$ and $2T_h < M-T_h-n$). Summing these, the total dominant SVD cost for NTDPC is approximately $\mathcal{O}(4T_h^2 M + 4T_h^2 (M - T_h - n)) \approx \mathcal{O}(8T_h^2 M)$ for large $M$.

This comparison of the dominant factorization/inversion costs highlights that NTDPC offers a reduction in computational effort per sample time with a lower constant factor (approximately 8 compared to 9) in the $\mathcal{O}(T_h^2 M)$ complexity. Furthermore, similar to other hybrid methods the NTDPC works with reduced order Hankel matrices which in turn reduces the memory needed for the online matrix estimation step, contributing to improved overall efficiency for time-varying systems.

\section{Conclusion}
This paper presents a comprehensive analysis of the Noise-Tolerant Data-Driven Predictive Control (NTDPC) framework, addressing the critical challenge of measurement noise in hybrid data-driven predictive control while achieving robust tracking of non-zero references. By leveraging singular value decomposition (SVD), NTDPC effectively separates system dynamics from noise within reduced-order Hankel matrices, enabling accurate trajectory predictions with shorter data horizons and reduced computational complexity. The stability analysis demonstrates that NTDPC achieves asymptotic stability in the noise-free case, ensuring precise tracking of the reference equilibrium. In the presence of measurement noise, NTDPC exhibits input-to-state stability (ISS), guaranteeing robust performance with bounded tracking errors proportional to the noise magnitude.

Numerical simulations on a flight control benchmark validate the theoretical findings, showcasing NTDPC’s superior performance compared to SMMPC and SPC. The output trajectories demonstrate stable tracking with damped oscillations and less aggressive control inputs, while the performance index highlights NTDPC’s effectiveness in minimizing tracking errors. The proposed sensitivity index, \( I_s \), provides a practical tool for selecting prediction horizons under varying noise conditions, with empirical results indicating that \( I_s \lesssim 0.7 \) ensures algorithm convergence. This index, computed offline for linear time-invariant systems, enhances the applicability of NTDPC in real-world scenarios. Furthermore, the computational complexity analysis reveals that NTDPC reduces the dominant factorization costs by approximately 11\% compared to SPC, with lower memory requirements due to reduced-order Hankel matrices, making it well-suited for time-varying systems.

\bibliographystyle{unsrt} 
\bibliography{REF}       
\end{document}